\newtheorem{theorem}{Theorem}
\newtheorem{lemma}[theorem]{Lemma}
\newtheorem{prop}[theorem]{Proposition}
\theoremstyle{definition}
\newcommand{\Beta}{\mathrm{B}}
\newcommand{\cV}{\mathcal{V}}
\global\long\def\dd{\mathrm{d}}
\newcommand{\one}{\boldsymbol{1}}
\newcommand{\ex}[1]{\ensuremath{\mathbb{E}\left[ #1\right]}}
\newcommand{\pr}[1]{\ensuremath{\mathbb{P}\left[ #1\right]}}
\DeclareMathOperator{\var}{\sf Var}
\newcommand{\DKL}[2]{\ensuremath{D\left( #1 \, \middle \| \, #2 \right)}}
\DeclareMathOperator{\vol}{Vol}
\DeclareMathOperator{\cone}{cone}
\newcommand{\reals}{\mathbb{R}}
\newcommand{\eps}{\epsilon}
\newcommand{\normal}{\mathcal{N}}
\newcommand{\gmid}{\! \mid \!}
\newcommand{\set}{S}
\let\originalleft\left
\let\originalright\right
\renewcommand{\left}{\mathopen{}\mathclose\bgroup\originalleft}
\renewcommand{\right}{\aftergroup\egroup\originalright}
  \newif\iflongpaper
\title{Two-Moment Inequalities for R\'enyi Entropy\\ and Mutual Information}
\author{
Galen Reeves
\thanks{The work of G.\ Reeves was supported in part by funding from the Laboratory for Analytic Sciences (LAS).  Any opinions, findings, conclusions, and recommendations expressed in this material are those of the author and do not necessarily reflect the views of the sponsors. }
\thanks{G. Reeves is with the Department of Electrical and Computer Engineering and the Department of Statistical Science, Duke University, Durham (e-mail: galen.reeves@duke.edu)}
}
\begin{document}


\maketitle

\begin{abstract}

This paper explores some applications of a two-moment inequality for the integral of the $r$-th power of a function, where $0  < r< 1$. The first contribution is an upper bound on the R\'enyi entropy of a random vector in terms of the two different moments. When one of the moments is the zeroth moment, these bounds recover previous results based on maximum entropy distributions under a single moment constraint. More generally, evaluation of the bound with two carefully chosen nonzero moments can lead to significant improvements with a modest increase in complexity. The second contribution is a method for upper bounding mutual information in terms of certain integrals with respect to the variance of the conditional density. The bounds have a number of useful properties arising from the connection with variance decompositions.
\end{abstract}

\begin{IEEEkeywords}
Information Inequalities,  Mutual Information, R\'enyi  Entropy. 
\end{IEEEkeywords}

\section{Introduction}

Measures of entropy and information play a central role in applications throughout information theory,  statistics, computer science, and statistical physics. In many cases, there is interest in understanding maximal properties of these measures over a  given family of distributions. One example is given by the principle of maximum entropy, which originated in statistical mechanics and was introduced in broader context by Jaynes~\cite{jaynes:1982}. 

Entropy-moment inequalities can be used to describe properties of distributions characterized by moment constraints. Perhaps  the most well known entropy-moment inequality follows from the fact that the Gaussian distribution maximizes differential entropy over all distributions with the same variance \cite[Theorem~8.6.5]{cover:2006}. This inequality leads to remarkably simple proofs for fundamental results in information theory and estimation theory. 

A variety of entropy-moment inequalities have also been studied in the context of R\'enyi entropy \cite{costa:2002,lutwak:2004,lutwak:2007,johnson:2007, lutwak:2013}, which is a generalization of Shannon entropy. Recent work has focused on the extremal distributions for the closely related R\'enyi divergence~\cite{erven:2014,kumar:2015a,kumar:2015b, sason:2016,bobkov:2016}. 

Another line of work focuses on relationships between measures of dissimilarity between probability distributions provided by the family of $f$-divergences \cite{nielsen:2014, sason:2016a}, which includes as special cases, the total variation distance, relative entropy (or Kullback-Leibler divergence), R\'enyi divergence, and chi-square divergence. One application of these results is to provide bounds for mutual information in terms of divergence measures that dominate relative entropy, such as the chi-square divergence; see e.g.\ \cite{nielsen:2014, shao-lun:2015}.

\subsection{Overview of results}


The starting point of our analysis (Proposition~\ref{prop:two_moment}) is an inequality for the integral of the $r$-th power of a  function. Specifically, for any numbers $p,q,r$ with 
\[
0 < r< 1 \quad \text{and} \quad  p < \frac{ 1-r}{r} < q,
\]
the following inequality holds:
\[
\left( \int \! f^r(x) \, \dd x \! \right)^{\!\frac{1}{r}}  \!\!\!
 \le C\,  \left(\int \! |x|^p f(x) \, \dd x \! \right)^{\!\lambda} \! \left( \int\! |x|^q f(x) \, \dd x \! \right)^{\!1-\lambda}\!\!,
\]
for all  non-negative functions $f: \reals_+ \to \reals_+$ where $C$ and $0 < \lambda < 1$ are given explicitly in terms of the tuple $(p,q,r)$. An extension to functions defined on an arbitrary subset of $\reals^n$ is also provided (Proposition~\ref{prop:two_moment_n}).

The remainder of the paper shows how this inequality can be used to provide bounds on information measures such as R\'enyi entropy and mutual information.  Some useful properties of the bounds include:
\begin{itemize}

\item  \textit{Simplicity:} Beyond the existence of a density, these bounds do not require further regularity conditions such as boundedness or sub-exponential tails. As a consequence, these bounds can be applied under relatively mild technical assumptions.   

\item \textit{Tightness:} For some applications, the bounds can provide an accurate characterization of the underlying information measures. For example, a special case of Proposition~\ref{prop:MI_two_moment_inq} in this paper played a key role  in the author's recent work \cite{reeves:2016,reeves:2016a,reeves:2016b}, where it was used to bound the relative entropy between low-dimensional projections of a random vector and a Gaussian approximation.

\item \textit{Geometric Interpretation:} Our bounds on the mutual information between random variables $X$ and $Y$ can be expressed in terms of the variance of the conditional density of $Y$ given $X$. Specifically, the bounds depend on integrals of the form:   
\[
 \int   \|y\|^{s} \var(f_{Y|X}(y|X)) \, \dd y.
\]
For $s=0$, this integral is the expected squared $L^2$ distance between the conditional density $f_{Y|X}$ and the marginal density $f_Y$. 
\end{itemize}

The paper is organized as follows: Section~\ref{sec:moment_inequalites} provides integral inequalities for nonnegative functions; Section~\ref{eq:entropy_bounds} gives bounds on R\'enyi entropy of orders less than one; and Section~\ref{sec:MI_bounds} provides bounds on mutual information. 

\section{Moment Inequalities} \label{sec:moment_inequalites}

Throughout this section, we assume that $f$ is a real-valued Lebesgue measurable function defined on a measurable subset $S$ of $\reals^n$. For any positive number $p$, the function $\|\cdot\|_p$ is defined according to 
\[
\|f\|_p = \left( \int_S \left| f(x) \right|^p \, \dd x \right)^\frac{1}{p}.
\]
Recall that for $0 <p < 1$, the function $\|\cdot\|_p$ is not a norm because it does not satisfy the triangle inequality. The $s$-th moment of $f$ is defined according to
\[
\mu_s(f)  = \int_{S}  \|x\|^s \, f(x) \, \dd x,
\]
where $\|\cdot\|$ denotes the standard Euclidean norm on vectors.

\subsection{Multiple Moments} 
Consider the following  optimization problem:
\begin{alignat*}{3}
 \text{maximize}  \quad & \|f\|_r \\
 \text{subject to} \quad & f(x) \ge  0 &\quad &   \text{for all $x \in S$}   \label{eq:optimization_prob}\\
 &   \mu_{s_i}(f)  \le m_i & &  \text{for $1\le i \le k$}.
\end{alignat*}
For $r \in (0,1)$ this is a convex optimization problem because $\|\cdot\|_r^r$ is concave and the moment constraints are linear. By standard theory in convex optimization\iflongpaper \ (see e.g., \cite{rockafellar:1970})\fi, it can be shown that if the problem is feasible and the maximum is finite, then the maximizer has the form
\[
f^*(x) = \bigg(  \sum_{i = 1}^k \nu^*_i \,  \|x\|^{s_i} \bigg)^\frac{1}{r-1} , \quad \text{for all $x \in S$}.  
\]
The parameters $\nu^*_1, \cdots, \nu^*_k$ are nonnegative and the $i$-th moment constraint holds with equality for all  $i$ such that $\nu^*_i $ is strictly positive, that is $\nu^*_i > 0 \implies \mu_{s_i}(f^*) = m_i$. Consequently, the maximum can be expressed in terms of a linear combination of the moments: 
\[
\|f^*\|^r_r = \|(f^*)^r\|_1 = \|f^* (f^*)^{r-1} \|_1   =  \sum_{i=1}^k \nu_i^* m_i.
\]

For the purposes of this paper, is it is useful to consider a relative inequality in terms of the moments of the function itself. Given a number $0 < r < 1$ and vectors $s \in \reals^k$ and $\nu \in \reals_+^k$ the function $c_r(\nu,s)$ is defined according to 
\[
c_r(\nu,s)  = \left(  \int_0^\infty \bigg( \sum_{i=1}^k \nu_i \, x^{s_i}  \bigg)^{-\frac{r}{1-r} }  \dd x \right)^\frac{1-r}{r},
\]
if the integral exists. Otherwise, $c_r(\nu,s)$ is defined to be positive infinity. It can be verified that $c_r(\nu,s)$ is finite provided that there exists $i,j$ such that $\nu_i$ and $\nu_j$ are strictly positive and $s_i < (1-r)/r < s_j$. 

The following result can be viewed as a consequence of the constrained optimization problem described above. We provide a different and very simple proof that depends only  on H\"older's inequality. 

\begin{prop}\label{prop:k_moments} 
Let $f$ be a  nonnegative Lebesgue measurable function defined on the positive reals $\reals_+$. For any number $0 < r< 1$ and vectors $s \in \reals^k$ and $\nu \in \reals_+^k$, we have
\[
\| f\|_r \le c_r(\nu,s) \sum_{i=1}^k \nu_i \, \mu_{s_i} (f) .
\]
\end{prop}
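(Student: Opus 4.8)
The plan is to prove this as a direct consequence of H\"older's inequality for the conjugate exponents $1/r$ and $1/(1-r)$, applied to a cleverly chosen factorization of $f^r$. The key observation is that the claimed inequality is homogeneous of degree one in $f$ and invariant under nothing in particular, so no normalization is needed; we simply need to produce the constant $c_r(\nu,s)$ through an explicit application of H\"older.

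First, write $g(x) = \sum_{i=1}^k \nu_i x^{s_i}$, a nonnegative function on $\reals_+$, and note that when $c_r(\nu,s)$ is infinite the inequality is trivial, so assume it is finite. The idea is to insert a factor of $g^{\alpha}$ and $g^{-\alpha}$ for a suitable exponent $\alpha$: write
\[
f^r(x) = \left( f(x)\, g^{a}(x) \right)^r \cdot g^{-ar}(x),
\]
where $a$ is to be chosen. Integrating over $\reals_+$ and applying H\"older's inequality with exponents $1/r$ and $1/(1-r)$ gives
\[
\int_0^\infty f^r \, \dd x \le \left( \int_0^\infty f\, g^{a} \, \dd x \right)^{r} \left( \int_0^\infty g^{-\frac{ar}{1-r}} \, \dd x \right)^{1-r}.
\]
The second factor is forced to match $c_r(\nu,s)^r$, which requires $ar/(1-r) = r/(1-r)$, i.e.\ $a = 1$. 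With $a=1$, the first integral becomes $\int_0^\infty f(x)\, g(x)\, \dd x = \sum_{i=1}^k \nu_i \int_0^\infty x^{s_i} f(x)\,\dd x = \sum_{i=1}^k \nu_i \mu_{s_i}(f)$. Substituting back and taking the $1/r$-th power of both sides yields exactly
\[
\|f\|_r = \left( \int_0^\infty f^r \, \dd x \right)^{1/r} \le c_r(\nu,s) \sum_{i=1}^k \nu_i \, \mu_{s_i}(f),
\]
since $\left(\int g^{-r/(1-r)}\right)^{(1-r)/r} = c_r(\nu,s)$ by definition.

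The main thing to be careful about — rather than a genuine obstacle — is the direction of H\"older's inequality when the exponents are in $(0,1)$ versus $(1,\infty)$. Here $1/r > 1$ and $1/(1-r) > 1$ are genuine conjugate exponents (both exceeding $1$, with reciprocals summing to $1$), so the ordinary H\"older inequality applies in the standard direction, and no reverse-H\"older subtleties arise. One should also note measurability and nonnegativity are enough to make all integrals well-defined in $[0,\infty]$; if $\mu_{s_i}(f) = \infty$ for some $i$ with $\nu_i > 0$ the right-hand side is infinite and there is nothing to prove, and if $\|f\|_r = \infty$ then the H\"older bound shows the right-hand side is infinite as well, so the inequality holds in all cases. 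This is why the proposition needs no regularity hypotheses beyond measurability — exactly the "simplicity" advertised in the overview.
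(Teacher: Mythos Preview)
Your proof is correct and is essentially identical to the paper's: both introduce $g(x)=\sum_i \nu_i x^{s_i}$, factor $f^r=(fg)^r\cdot g^{-r}$, and apply H\"older with conjugate exponents $1/r$ and $1/(1-r)$. The only cosmetic difference is that you introduce the exponent $a$ and then argue it must equal $1$, whereas the paper writes the factorization with $a=1$ from the start; your additional remarks on the trivial cases ($c_r(\nu,s)=\infty$ or some $\mu_{s_i}(f)=\infty$) are a nice touch.
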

\begin{proof} Let $g(x ) =\sum_{i = 1}^k \nu_i \,  x^{s_i}$.  Then, we have
\begin{align*}
\|f\|_r^r   & =\|  g^{-r}   ( f g)^r \|_1\\
& \le \| g^{-r} \|_\frac{1}{1-r} \| (gf)^r \|_\frac{1}{r} \\
& =  \| g^\frac{-r}{1-r}  \|_1^{1-r}  \| gf \|^r_1 \\
& =    \bigg( c_r(\nu,s)\,  \sum_{i=1}^k \nu_i \, \mu_{s_i} (f)  \bigg)^r,
\end{align*}
where second step follows from  H\"older's inequality with conjugate exponents  $1/(1-r)$ and $1/r$.
\end{proof}

\subsection{Two Moments} 

The next result follows from  Proposition~\ref{prop:k_moments} for the case of two moments.

\begin{prop}\label{prop:two_moment}
Let $f$ be a  nonnegative Lebesgue measureable function defined on the positive reals $\reals_+$.  For any numbers $p,q,r$ with $0 < r <1$ and $p < 1/r - 1 < q$, we have
\[
\|f\|_r \le \left[ \psi_{r}(p,q) \right]^\frac{1-r}{r}  \,  [\mu_p(f)]^{\lambda}   [\mu_q(f)]^{1-\lambda},
\]
where  $\lambda = (q + 1 -  1/r  )/(q-p)$ and 
\begin{align}
\psi_r(p,q) = \frac{1}{(q-p)} \widetilde{\Beta}\left( \frac{r \lambda }{ 1- r}   ,  \frac{r(1 -  \lambda) }{ 1- r} \right), \label{eq:psi_r} 
\end{align}
where $\widetilde{B}(a,b) = \Beta(a,b) (a+b)^{a+b} a^{-a} b^{-b}$ and $\Beta(a,b)$ is the Beta function. 
\end{prop}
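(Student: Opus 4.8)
The plan is to obtain Proposition~\ref{prop:two_moment} as the two-moment specialization of Proposition~\ref{prop:k_moments}, optimized over the free parameter $\nu$. Fix $s = (p,q)$ and an arbitrary $\nu = (\nu_1,\nu_2)$ with $\nu_1,\nu_2 > 0$; if either coordinate vanishes the integral defining $c_r(\nu,s)$ diverges and the bound is trivially infinite, so nothing is lost. Proposition~\ref{prop:k_moments} then gives $\|f\|_r \le c_r(\nu,s)\,(\nu_1 \mu_p(f) + \nu_2 \mu_q(f))$, and what remains is to (i) evaluate $c_r(\nu,s)$ in closed form and (ii) minimize the right-hand side over $(\nu_1,\nu_2)$.

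For step (i), write $\alpha = r/(1-r)$ and factor $\nu_1 x^p + \nu_2 x^q = \nu_1 x^p\bigl(1 + (\nu_2/\nu_1)\,x^{q-p}\bigr)$ inside the integral $\int_0^\infty (\nu_1 x^p + \nu_2 x^q)^{-\alpha}\,\dd x$. The substitution $u = (\nu_2/\nu_1)\,x^{q-p}$ reduces this to a multiple of the standard Beta integral $\int_0^\infty u^{a-1}(1+u)^{-(a+b)}\,\dd u = \Beta(a,b)$, where tracking the exponents yields $a = r(1-\lambda)/(1-r) = \alpha(1-\lambda)$ and $b = r\lambda/(1-r) = \alpha\lambda$ with $\lambda = (q+1-1/r)/(q-p)$; note $a+b = \alpha$, and the hypotheses $p < 1/r - 1 < q$ are exactly what make $a,b>0$ (hence $\lambda\in(0,1)$ and the integral finite). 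Collecting the powers of $\nu_1,\nu_2$ generated by the substitution gives $c_r(\nu,s) = \bigl(\Beta(a,b)/(q-p)\bigr)^{(1-r)/r}\,\nu_1^{-\lambda}\nu_2^{-(1-\lambda)}$.

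For step (ii), the map $(\nu_1,\nu_2)\mapsto \nu_1^{-\lambda}\nu_2^{-(1-\lambda)}(\nu_1\mu_p(f) + \nu_2\mu_q(f))$ is homogeneous of degree zero, so it depends only on the ratio $t = \nu_2/\nu_1$; minimizing $t^{-(1-\lambda)}\mu_p(f) + t^{\lambda}\mu_q(f)$ in one variable gives $t^\star = (1-\lambda)\mu_p(f)/(\lambda\mu_q(f))$, and the resulting minimal value simplifies to $\lambda^{-\lambda}(1-\lambda)^{-(1-\lambda)}\,\mu_p(f)^\lambda\mu_q(f)^{1-\lambda}$ (the two terms combine into one because $\lambda + (1-\lambda) = 1$). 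Substituting back, the bound reads $\|f\|_r \le \bigl(\Beta(a,b)/(q-p)\bigr)^{(1-r)/r}\lambda^{-\lambda}(1-\lambda)^{-(1-\lambda)}\,\mu_p(f)^\lambda\mu_q(f)^{1-\lambda}$.

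Finally, one identifies the leading constant with $[\psi_r(p,q)]^{(1-r)/r}$: using $a = \alpha(1-\lambda)$, $b = \alpha\lambda$, $a+b = \alpha$, the definition $\widetilde{\Beta}(a,b) = \Beta(a,b)(a+b)^{a+b}a^{-a}b^{-b}$ collapses to $\Beta(a,b)\,\lambda^{-\alpha\lambda}(1-\lambda)^{-\alpha(1-\lambda)}$, whence $\bigl(\Beta(a,b)/(q-p)\bigr)^{1/\alpha}\lambda^{-\lambda}(1-\lambda)^{-(1-\lambda)} = \bigl(\widetilde{\Beta}(a,b)/(q-p)\bigr)^{1/\alpha} = [\psi_r(p,q)]^{(1-r)/r}$, using symmetry of $\widetilde{\Beta}$ and $1/\alpha = (1-r)/r$. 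The only real labor is the exponent bookkeeping in steps (i) and (ii); there is no genuine obstacle, and the $\widetilde{\Beta}$ normalization is precisely engineered to absorb the leftover $\lambda,\,1-\lambda$ factors left over from the optimization.
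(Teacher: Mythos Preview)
Your proof is correct and follows essentially the same approach as the paper: both apply Proposition~\ref{prop:k_moments} with $k=2$, reduce the defining integral for $c_r(\nu,s)$ to the Beta integral via the substitution $u \propto x^{q-p}$, and then optimize over the free weight parameter. The only cosmetic difference is that the paper parametrizes the weights as $\nu=(\gamma^{1-\lambda},\gamma^{-\lambda})$ from the outset (so that $c_r(\nu,s)$ is automatically independent of $\gamma$), whereas you keep a general $(\nu_1,\nu_2)$ and invoke degree-zero homogeneity to reduce to the ratio $t=\nu_2/\nu_1$; the resulting one-variable optimizations and final constants coincide.
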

\iflongpaper
\begin{proof}
Letting $s = (p,q)$ and $\nu = (\gamma^{1-\lambda} , \gamma^{-\lambda})$ with $\lambda >0$, we have
\[
[c_r(\nu, s)]^\frac{r}{1-r}  =  \int_0^\infty \left( \gamma^{1-\lambda} \,  x^p  + \gamma^{-\lambda} \,  x^q\right)^{-\frac{r}{1-r} } \, \dd x.
\]
Making the change of variable $x \mapsto ( \gamma  u )^\frac{1}{q-p}$ leads to 
\[
[c_r(\nu, s)]^\frac{r}{1-r}  = \frac{ 1  }{(q-p)}  \int_0^\infty \frac{ u^{b- 1}}{ (1+u)^{a+b} } \, \dd u = \frac{ \Beta \left(a   ,  b \right) }{(q-p)} ,
\]
where $a = \frac{r}{1-r} \lambda$ and $b = \frac{r}{1-r} (1-\lambda)$ and the second step follows from the integral representation of the Beta function \cite[Eq.~(1.1.19)]{andrews:1999}. Therefore, by Proposition~\ref{prop:k_moments}, the inequality
\[
\|f\|_r  \le \left( \frac{  \Beta \left(a   ,  b \right) }{q-p}  \right)^\frac{1-r}{r}  \left( \gamma^{1-\lambda}  \mu_p(f) + \gamma^{-\lambda} \, \mu_q(f)    \right),
\]
holds for all $\gamma > 0$.  Evaluating this inequality  with
\[
\gamma  = \frac{ \lambda \, \mu_q(f) }{ (1-\lambda) \mu_p(f)},
\]
leads to the stated result. 
\end{proof}
\fi

The special case $r = 1/2$ admits the simplified expression
\begin{equation}
\psi_{1/2}(p,q)  = \frac{ \pi  \lambda^{-\lambda} (1-\lambda)^{-(1-\lambda)}}{ (q-p) \sin( \pi \lambda)}, \label{eq:psi_one_half}
\end{equation}
where we have used Euler's reflection formula for the Beta function \cite[Theorem~1.2.1]{andrews:1999}.

Next, we consider an extension of Proposition~\ref{prop:two_moment} for functions defined on $\reals^n$. Given any measurable subset $S$ of $\reals^n$ we define
\begin{equation}
\omega(S)  =  \vol( B^n   \cap \mathrm{cone}( \set)) \label{eq:oemgaS},
\end{equation} 
where $B^n = \{ u \in \reals^n  : \|u\| \le 1\}$ is the $n$-dimensional Euclidean ball of radius one and 
\[
 \mathrm{cone}( \set) = \{ x \in \reals^n \, : \, \text{$t x \in \set $ for some $t > 0$}\}.
\]
The function $\omega(S)$ is proportional to the surface measure of the projection of $\set$ on the Euclidean sphere and satisfies 
\begin{equation}
\omega(S) \le \omega(\reals^n) =  \frac{\pi^\frac{n}{2} }{\Gamma(\frac{n}{2} +1)}, \label{eq:omega_UB}
\end{equation}
for all $S \subseteq \reals^n$. Note that $\omega(\reals_+) = 1$ and $\omega(\reals) = 2$. 

\begin{prop}\label{prop:two_moment_n} Let $f$ be a  nonnegative Lebesgue measurable function defined on a subset $\set$ of $\reals^n$. 
For any numbers $p,q,r$ with  $0 < r< 1$ and $p < 1/r -1 < q$,  we have
\[
 \|f\|_r \le  \left[ \omega(\set)  \,  \psi_r(p,q)  \right]^{\frac{1-r}{r}}\,  [\mu_{np}(f)]^{\lambda}   [\mu_{nq}(f)]^{1-\lambda}  ,
\]
where $\lambda = (q + 1 -  1/r)/(q-p)$ and $\psi_r(p,q)$ is given by \eqref{eq:psi_r}. 
\end{prop}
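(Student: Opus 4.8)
The plan is to reduce the $n$-dimensional statement of Proposition~\ref{prop:two_moment_n} to the one-dimensional statement of Proposition~\ref{prop:two_moment} by passing to polar coordinates. Write $x = \rho u$ with $\rho = \|x\| \ge 0$ and $u = x/\|x\|$ on the unit sphere $\mathbb{S}^{n-1}$. Then $\dd x = \rho^{n-1} \, \dd\rho \, \dd\sigma(u)$, where $\dd\sigma$ is the surface measure on $\mathbb{S}^{n-1}$, and the set $S$ corresponds to $\{(\rho,u) : \rho > 0, \ \rho u \in S\}$; the relevant directions $u$ are precisely those in $\mathrm{cone}(S) \cap \mathbb{S}^{n-1}$, whose surface measure is $n\,\omega(S)$ by \eqref{eq:oemgaS} and the normalization $\omega(\mathbb{R}^n) = \pi^{n/2}/\Gamma(n/2+1)$ (the volume of the unit ball), i.e.\ $\sigma(\mathbb{S}^{n-1}) = n\,\omega(\mathbb{R}^n)$.

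First I would express all three quantities in polar form. For the left side, $\|f\|_r^r = \int_S f^r(x)\,\dd x = \int \!\!\int_0^\infty f^r(\rho u)\,\rho^{n-1}\,\dd\rho\,\dd\sigma(u)$, so if I set $g_u(t) = f(t^{1/n} u)$ for $t > 0$ and substitute $t = \rho^n$ (so $\rho^{n-1}\,\dd\rho = \tfrac1n\,\dd t$), the inner integral becomes $\tfrac1n\int_0^\infty g_u^r(t)\,\dd t = \tfrac1n\|g_u\|_r^r$. Similarly $\mu_{np}(f) = \int\!\!\int_0^\infty \rho^{np} f(\rho u)\,\rho^{n-1}\,\dd\rho\,\dd\sigma(u) = \tfrac1n\int\mu_p(g_u)\,\dd\sigma(u)$, and likewise $\mu_{nq}(f) = \tfrac1n\int\mu_q(g_u)\,\dd\sigma(u)$. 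The point of the substitution $t = \rho^n$ is exactly that the moment exponent $np$ in $x$-space becomes $p$ in $t$-space, matching the hypotheses of Proposition~\ref{prop:two_moment}.

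Next, for each fixed direction $u$, apply Proposition~\ref{prop:two_moment} to $g_u$ (a nonnegative measurable function on $\mathbb{R}_+$) with the same $(p,q,r)$: this gives $\|g_u\|_r \le [\psi_r(p,q)]^{(1-r)/r}\,[\mu_p(g_u)]^\lambda[\mu_q(g_u)]^{1-\lambda}$, hence $\|g_u\|_r^r \le \psi_r(p,q)^{1-r}\,\mu_p(g_u)^{r\lambda}\mu_q(g_u)^{r(1-\lambda)}$. Integrating over $u$ and using $\|f\|_r^r = \tfrac1n\int\|g_u\|_r^r\,\dd\sigma(u)$, I need to bound $\int \mu_p(g_u)^{r\lambda}\mu_q(g_u)^{r(1-\lambda)}\,\dd\sigma(u)$ by $\big(\int\mu_p(g_u)\,\dd\sigma\big)^{r\lambda}\big(\int\mu_q(g_u)\,\dd\sigma\big)^{r(1-\lambda)}$ times an appropriate power of the total measure. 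Since $r\lambda + r(1-\lambda) = r < 1$, this is Hölder's inequality on $(\mathbb{S}^{n-1},\dd\sigma)$ with the three exponents $1/(r\lambda)$, $1/(r(1-\lambda))$, and $1/(1-r)$ (the last one absorbing the leftover mass), producing a factor $\sigma(\mathbb{S}^{n-1})^{1-r} = (n\,\omega(\mathbb{R}^n))^{1-r}$ — but more carefully, since the directional integrals are only over $\mathrm{cone}(S)\cap\mathbb{S}^{n-1}$, the factor is $(n\,\omega(S))^{1-r}$. Collecting all powers of $n$ and $\omega(S)$ and taking the $r$-th root should yield exactly $[\omega(S)\,\psi_r(p,q)]^{(1-r)/r}[\mu_{np}(f)]^\lambda[\mu_{nq}(f)]^{1-\lambda}$, after the $\tfrac1n$ factors cancel against the $n^{1-r}$ and the $\mu$-normalizations.

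The main obstacle I anticipate is bookkeeping rather than conceptual: tracking the powers of $n$ through the substitution $t=\rho^n$ in all three integrals and through Hölder's inequality, and confirming that the three Hölder exponents $\tfrac1{r\lambda},\tfrac1{r(1-\lambda)},\tfrac1{1-r}$ indeed have reciprocals summing to one (they do, since $r\lambda + r(1-\lambda) + (1-r) = 1$) so that the constant is a clean power of $\sigma(\mathrm{cone}(S)\cap\mathbb{S}^{n-1}) = n\,\omega(S)$. A minor technical point is measurability of $u \mapsto \mu_p(g_u)$ and the applicability of Fubini/Tonelli, which is routine since all integrands are nonnegative and measurable. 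One should also handle degenerate cases (a directional integral being $0$ or $\infty$) by the usual conventions, noting the inequality is trivially true when the right side is infinite.
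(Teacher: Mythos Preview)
Your proposal is correct and uses the same ingredients as the paper---polar coordinates with the substitution $t=\rho^n$, H\"older's inequality on the sphere, and Proposition~\ref{prop:two_moment}---but in a different order. The paper first defines a \emph{single} radially averaged function
\[
g(y) = \frac{1}{n}\int_{\mathbb{S}^{n-1}} f(y^{1/n}u)\,\dd\sigma(u),
\]
applies H\"older on the sphere (two exponents, $\tfrac{1}{1-r}$ and $\tfrac{1}{r}$) to show $\|f\|_r \le \omega(S)^{(1-r)/r}\|g\|_r$, observes that $\mu_{ns}(f)=\mu_s(g)$ exactly, and then invokes Proposition~\ref{prop:two_moment} once for $g$. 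You instead keep the directional slices $g_u$ separate, apply Proposition~\ref{prop:two_moment} to each $g_u$, and only then combine via a three-exponent H\"older on the sphere. Both routes yield exactly the same constant; the paper's ordering is slightly cleaner because it reduces to a single one-dimensional application, while yours is perhaps more transparent about where the directional information goes. Your bookkeeping check on the powers of $n$ is right: the $\tfrac{1}{n}$ from the substitution cancels against $(n\omega(S))^{1-r}\cdot n^{r\lambda}\cdot n^{r(1-\lambda)} = n\,\omega(S)^{1-r}$.
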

\iflongpaper
\begin{proof}
Let $f$ be extended to $\reals^n$ using the rule $f(x)  = 0$ for all $x$ outside of $S$ and let $g: \reals_+ \to \reals_+$ be defined according to 
\[
g(y)  = \frac{1}{n}   \int_{\mathbb{S}^{n-1}} f( y^\frac{1}{n}  u)  \, \dd \sigma(u),
\]
where $\mathbb{S}^{n-1} = \{u \in \reals^n \, : \, \|u\| =1\}$ is the Euclidean sphere of radius one and $\sigma(u)$ is the surface measures of the sphere. We will show that
\begin{align}
\|f\|_r & \le \left( \omega(S) \right)^\frac{1-r}{r} \|g\|_r \label{eq:two_moment_n_a}\\
\mu_{ns}(f) & = \mu_s(g). \label{eq:two_moment_n_b}
\end{align}
Then, the stated inequality then follows from applying Proposition~\ref{prop:two_moment} to the function $g$. 

In order to prove \eqref{eq:two_moment_n_a}, we begin with a transformation into polar coordinates:
\begin{equation}
\|f\|^r_r = \int_0^\infty  \int_{\mathbb{S}^{n-1}} \left| f(t  u)\right|^r  t^{n-1}  \, \dd \sigma(u) \, \dd t. \label{eq:two_moment_n_c} 
\end{equation}
Letting $\one_{\cone(S)}(x)$ denote the indicator function of the set $\cone(S)$, the integral over the sphere can be  bounded using:
\begin{align}
\MoveEqLeft[1] \int_{\mathbb{S}^{n-1}} \left| f(t  u)\right|^r  \, \dd \sigma(u) \notag \\
& = \int_{\mathbb{S}^{n-1}}  \one_{\cone(S)}(u) \, \left| f(t  u)\right|^r  \, \dd \sigma(u) \notag \\
& \overset{(a)}{\le} \left( \int_{\mathbb{S}^{n-1}} \! \one_{\cone(S)}(u) \,  \dd \sigma(u) \! \right)^{\! 1-r} \left(  \int_{\mathbb{S}^{n-1}}\left|  f(t  u) \right|  \, \dd \sigma(u) \! \right)^r \notag \\
& \overset{(b)}{=} n\,  \left( \omega(S) \right)^{1-r} g^r(t^n).  \label{eq:two_moment_n_d} 
\end{align}
where: (a) follows from H\"older's inequality with conjugate exponents $\frac{1}{1-r}$ and $\frac{1}{r}$; and (b) follows from the definition of $g$ and the fact that 
\begin{align*}
\omega(S) & = \int_0^1 \int_{\mathbb{S}^{n-1}} \! \one_{\cone(S)}(u) \,t^{n-1} \,   \dd \sigma(u)\, \dd t\\
& = \frac{1}{n} \int_{\mathbb{S}^{n-1}} \! \one_{\cone(S)}(u) \,  \dd \sigma(u).
\end{align*}
Plugging \eqref{eq:two_moment_n_d} back into \eqref{eq:two_moment_n_c} and then making the change of variables $t \to y^\frac{1}{n}$ yields 
\[
\|f\|^r_r  \le n\,  \left( \omega(S)\right)^{1-r}  \int_0^\infty g^r(t^n) t^{n-1} \, \dd t  = \left( \omega(S)\right)^{1-r} \|g\|_r^r.
\]

The proof of  \eqref{eq:two_moment_n_b} follows along similar lines. We have 
\begin{align*}
\mu_{ns}(f) 
& \overset{(a)}{=} \int_0^\infty \int_{\mathbb{S}^{n-1}} t^{ns}  f(t u )\,  t^{n-1}\, \dd \sigma(u)\,  \dd t  \\
& \overset{(b)}{=} \frac{1}{n} \int_0^\infty \int_{\mathbb{S}^{n-1}} y^{s}  f(y^\frac{1}{n}  u )  \, \dd \sigma(u)\, \dd y  \\
& = \mu_{s}(g) 
\end{align*}
where (a) follows from a transformation into polar coordinates and (b) follows form the change of variable $t \mapsto  y^\frac{1}{n}$.
\end{proof}
\fi

\section{R\'enyi Entropy Bounds}\label{eq:entropy_bounds} 

Let $X$ be a random vector that has a density $f(x)$ with respect to Lebesgue measure on  $\reals^n$. The differential R\'enyi entropy of order $r \in (0,1) \cup (1,\infty)$ is defined according to \cite{cover:2006}:
\[
h_r(X)  = \frac{1}{1-r} \log\left(  \int_{\reals^n} f^r(x) \, \dd x\right).
\]
The R\'enyi entropy is continuous and non-increasing in $r$. If the support set $S = \{x \in \reals^n  : f(x) > 0\}$ has finite measure then the limit as $r$ converges to zero is given by $h_0(X) = \log \vol(S)$. \iflongpaper If the support does not have finite measure then $h_r(X)$ increases to infinity as $r$ decreases to zero. \fi The case $r=1$ is given by the Shannon differential entropy:
 \[
h_1(X) = -  \int_S f(x) \log f(x) \, \dd x.
\]

Given a random variable $X$ that is not identically zero and numbers $p,q,r$ with $0<r<1$ and $p < 1/r - 1 < q$, we define the function
\[
L_r(X; p,q) =  \frac{r \lambda }{1-r}  \log \ex{ |X|^p } +\frac{r (1 - \lambda) }{1-r}   \log \ex{ |X|^q },
\]
where $\lambda = (q + 1 - 1/r )/(q-p)$.

The next result, which follows directly from Proposition~\ref{prop:two_moment_n}, provides an upper bound on the  R\'enyi entropy.

\begin{prop}\label{prop:entropy_bound}
Let $X$ be a random vector with a  density  on $\reals^n$. For any numbers $p,q,r$ with $0 <r<1$ and $p < 1/r -1 < q$, the R\'{e}nyi entropy satisfies
\begin{equation}
h_r(X)  \le \log \omega(S)  + \log \psi_r(p,q)  + L_{r}(\|X\|^n; p,q), \label{eq:entropy_bound}
\end{equation}
where $\omega(S)$ is defined in~\eqref{eq:oemgaS}  and $\psi_r(p,q)$ is defined in \eqref{eq:psi_r}.
\end{prop}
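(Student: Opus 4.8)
The plan is to apply Proposition~\ref{prop:two_moment_n} directly to the density $f$ of $X$ and then translate the resulting inequality from the $\|\cdot\|_r$ quasi-norm and moments $\mu_{np}(f), \mu_{nq}(f)$ into the language of R\'enyi entropy and expectations of powers of $\|X\|$. First I would note that since $X$ has a density $f$ on $\reals^n$, it is a nonnegative Lebesgue measurable function on the support set $S = \{x : f(x) > 0\}$, so the hypotheses of Proposition~\ref{prop:two_moment_n} are met for the given $p,q,r$. Applying that proposition gives
\[
\|f\|_r \le \left[ \omega(S)\, \psi_r(p,q)\right]^{\frac{1-r}{r}} [\mu_{np}(f)]^{\lambda}[\mu_{nq}(f)]^{1-\lambda},
\]
with $\lambda = (q+1-1/r)/(q-p)$ as in the statement.

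Next I would identify the two moment integrals with expectations. By definition, $\mu_{np}(f) = \int_{\reals^n} \|x\|^{np} f(x)\,\dd x = \ex{\|X\|^{np}}$, and likewise $\mu_{nq}(f) = \ex{\|X\|^{nq}}$. Writing $Z = \|X\|^n$, these become $\ex{|Z|^p}$ and $\ex{|Z|^q}$ respectively (since $Z \ge 0$), so that
\[
\lambda \log \mu_{np}(f) + (1-\lambda)\log \mu_{nq}(f) = \frac{1-r}{r}\, L_r(Z; p,q)
\]
by the definition of $L_r$ — here one just matches the coefficients $\frac{r\lambda}{1-r}$ and $\frac{r(1-\lambda)}{1-r}$ against $\lambda$ and $1-\lambda$ after multiplying through by $\frac{1-r}{r}$.

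Finally I would take logarithms of the displayed inequality and multiply by $\frac{1}{1-r} > 0$ (which preserves the direction). Since $h_r(X) = \frac{1}{1-r}\log\left(\int f^r\right) = \frac{r}{1-r}\log\|f\|_r$, taking $\frac{1}{1-r}\log$ of both sides and using the identification above yields
\[
h_r(X) \le \log\omega(S) + \log\psi_r(p,q) + L_r(\|X\|^n; p,q),
\]
which is exactly \eqref{eq:entropy_bound}. I expect no real obstacle here: the result is essentially a restatement of Proposition~\ref{prop:two_moment_n}, and the only thing requiring care is bookkeeping — correctly tracking the exponent $\frac{1-r}{r}$, the $\frac{r}{1-r}$ versus $\frac{1}{1-r}$ factors relating $h_r$ to $\log\|f\|_r$, and verifying that the logarithmic combination of the two moments collapses into the single expression $L_r(\|X\|^n; p,q)$. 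One should also remark that the hypothesis "$X$ not identically zero" (implicit in $X$ having a density) ensures the moments are positive so the logarithms are well defined, and that if either moment is infinite the bound holds trivially.
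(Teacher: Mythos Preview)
Your proposal is correct and is exactly the approach the paper takes: the paper's proof simply states that the result ``follows immediately from Proposition~\ref{prop:two_moment_n} and the definition of R\'enyi entropy,'' and your write-up just makes that immediate translation explicit. The only (harmless) slip is that you say to multiply by $\tfrac{1}{1-r}$ when in fact the factor relating $\log\|f\|_r$ to $h_r(X)$ is $\tfrac{r}{1-r}$, as you yourself note two lines later; with that correction the bookkeeping goes through verbatim.
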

\iflongpaper
\begin{proof}
This result follows immediately from Proposition~\ref{prop:two_moment_n} and the definition of R\'enyi entropy. 
\end{proof}
\fi

The relationship between Proposition~\ref{prop:entropy_bound} and previous results depends on whether the moment $p$ is equal to zero:
\begin{itemize}
\item \textit{One-moment inequalities:} If $p=0$ then there exists a distribution such that \eqref{eq:entropy_bound} holds with equality. This is because the zero-moment constraint ensures that the function that maximizes the R\'enyi entropy integrates to one. In this case, Proposition~\ref{prop:entropy_bound} is equivalent to previous results that focused on distributions that maximize R\'enyi entropy subject to a single moment constraint \cite{costa:2002,lutwak:2004,lutwak:2007}. With some abuse of terminology we refer to these bounds as one-moment inequalities\footnote{A more accurate name would be two-moment inequalities under the constraint that one of the moments is the zeroth moment.}. 

\item \textit{Two-moment inequalities:} If $p \ne 0$ then the right-hand side of \eqref{eq:entropy_bound} corresponds to the R\'{e}nyi entropy of a non-negative function that might not integrate to one. Nevertheless, the expression provides an upper bound on the R\'enyi entropy for any density with the same moments. We refer to the bounds obtained using a general pair $(p,q)$ as two-moment inequalities. 
\end{itemize}

The contribution of  two-moment inequalities is that they lead to tighter bounds. To quantify the tightness, we define $\Delta_r(X; p,q)$ to be the gap between the right-hand side and left-hand side of \eqref{eq:entropy_bound} corresponding to the pair $(p,q)$, that is
\begin{align*}
\Delta_r(X; p,q)  & =  \log \omega(S)  + \log \psi_r(p,q) \\
& \quad  + L_{r}(\|X\|^n; p,q) - h_r(X). 
\end{align*}
The gaps corresponding to the optimal two-moment and one-moment inequalities are defined according to:
\begin{align*}
\Delta_r(X) & = \inf_{p ,q} \Delta_r(p,q) \\
\widetilde{\Delta}_r(X) & = \inf_{q} \Delta_r(0,q).
\end{align*}

\subsection{Some consequences of these bounds} 

By Lyapunov's inequality, the mapping $s \mapsto \frac{1}{s} \log\ex{ |X|^s}$ is nondecreasing  on $[0,\infty)$ and thus
\begin{equation}
L_r(X; p,q)  \le  L_r(X; 0,q) =  \frac{1}{q}  \log \ex{ |X|^q}, \quad p \ge 0.  \label{eq:L_UB}
\end{equation}
In other words, the case $p = 0$ provides an upper bound on $L_r(X; p,q)$ for nonnegative $p$.  Alternatively, we also have the lower bound 
\begin{equation}
L_r(X; p,q)  \ge  \frac{r }{1-r}  \log \ex{ |X|^\frac{1-r}{r} },
\end{equation}
which follows from the convexity of  $\log \ex{ |X|^s}$.

A useful property of $L_r(X; p,q)$  is that it is additive with respect to the product of independent random variables. Specifically, if $X$ and $Y$ are independent, then
\begin{align}
L_r(X Y ; p,q) = L_r(X; p,q) + L_r(Y; p,q). \label{eq:L_additive}
\end{align}
One consequence is that 
multiplication by a bounded random variable cannot increase the R\'enyi entropy by an amount that exceeds the gap of the two-moment inequality with nonnegative moments.

\begin{prop}\label{prop:entropy_mult_bound}
Let $Y$ be a random vector on $\reals^n$ with finite R\'{e}nyi entropy of order $0 < r< 1$, and let $X$ be an independent random variable that satisfies $0 < X \le t$. Then, 
\[
h_r(XY) \le h_r(tY)  + \Delta_r(Y ; p,q),
\]
for all $0  < p < 1/r - 1 < q$. 
\end{prop}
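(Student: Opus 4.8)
The plan is to apply the R\'enyi entropy bound of Proposition~\ref{prop:entropy_bound} to the random vector $XY$, then isolate the contribution of the bounded factor $X$ using the additivity property \eqref{eq:L_additive} and the monotonicity \eqref{eq:L_UB}. First I would write, for the fixed admissible triple $(p,q,r)$ with $0 < p < 1/r - 1 < q$,
\[
h_r(XY) = \log\omega(S') + \log\psi_r(p,q) + L_r(\|XY\|^n; p,q) - \Delta_r(XY; p,q),
\]
where $S'$ is the support of $XY$ and this is just the definition of the gap $\Delta_r$. The key observation is that $\|XY\| = X\|Y\|$ since $X > 0$ is a scalar, so $\|XY\|^n = X^n \|Y\|^n$, and $X^n$ and $\|Y\|^n$ are independent. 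Hence by \eqref{eq:L_additive},
\[
L_r(\|XY\|^n; p,q) = L_r(X^n; p,q) + L_r(\|Y\|^n; p,q).
\]

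Next I would bound $L_r(X^n; p,q)$ from above. Since $0 < X \le t$ we have $0 < X^n \le t^n$, and both $p$ and $q$ are nonnegative, so each term $\tfrac{1}{p}\log\ex{X^{np}}$, $\tfrac{1}{q}\log\ex{X^{nq}}$ is at most $\log t^n = n\log t$. Because the coefficients $r\lambda/(1-r)$ and $r(1-\lambda)/(1-r)$ in the definition of $L_r$ sum to $\tfrac{r}{1-r}\cdot 1$... more directly, $L_r(X^n;p,q) = \tfrac{r\lambda}{1-r}\log\ex{X^{np}} + \tfrac{r(1-\lambda)}{1-r}\log\ex{X^{nq}} \le \tfrac{r\lambda}{1-r}\cdot np\log t + \cdots$; I should instead use the cleaner route: by \eqref{eq:L_UB} applied with the trivial constant random variable and monotonicity, $L_r(X^n;p,q) \le L_r(t^n; p,q) = \tfrac{r}{1-r}\log t^n \cdot(\text{something})$. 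The cleanest statement is that a deterministic constant $c>0$ satisfies $\ex{c^{np}} = c^{np}$, so $L_r(c^n;p,q) = \tfrac{r\lambda}{1-r} n p \log c + \tfrac{r(1-\lambda)}{1-r} n q \log c$; evaluating $\lambda p + (1-\lambda) q = 1/r - 1$ (from the definition of $\lambda$) gives $L_r(c^n;p,q) = \tfrac{r}{1-r}\cdot n(1/r-1)\log c = n\log c$. Thus $L_r(X^n;p,q) \le L_r(t^n;p,q) = n\log t$, where the inequality is exactly the monotonicity of $s\mapsto \tfrac1s\log\ex{|X|^s}$ used termwise as in \eqref{eq:L_UB}.

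Combining, and recalling $\log\omega(S') \le \log\omega(S_{tY})$ since scaling by the positive constant $t$ does not change the cone of the support (so $\omega$ is the same) while $X$ can only shrink it — actually $\cone(\supp(XY)) \subseteq \cone(\supp(Y)) = \cone(\supp(tY))$, hence $\omega(S') \le \omega(S_{tY})$ — I get
\[
h_r(XY) \le \log\omega(S_{tY}) + \log\psi_r(p,q) + L_r((t\|Y\|)^n; p,q) - \Delta_r(XY;p,q).
\]
The right-hand side, apart from $-\Delta_r(XY;p,q) \le 0$, is precisely the Proposition~\ref{prop:entropy_bound} bound for the vector $tY$ evaluated at $(p,q)$, which equals $h_r(tY) + \Delta_r(tY; p,q)$. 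So $h_r(XY) \le h_r(tY) + \Delta_r(tY;p,q)$, and it remains to check $\Delta_r(tY;p,q) = \Delta_r(Y;p,q)$. This holds because $\Delta_r$ is scale-invariant: scaling $Y \mapsto tY$ multiplies $h_r$ by an additive $n\log t$, multiplies each moment $\ex{\|Y\|^{ns}}$ by $t^{ns}$ which contributes $+n\log t$ to $L_r$ (again using $\lambda p + (1-\lambda)q = 1/r-1$), and leaves $\omega(S)$ and $\psi_r(p,q)$ unchanged, so the $n\log t$ terms cancel in the gap.

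The main obstacle, and the only place requiring care, is the bookkeeping around scaling: verifying that $\omega(\supp(tY)) = \omega(\supp(Y))$ and that $\Delta_r$ is genuinely scale-invariant, both of which rest on the identity $\lambda p + (1-\lambda)q = 1/r - 1$ that makes the homogeneity degrees match up. Everything else — independence of $X^n$ and $\|Y\|^n$, the additivity \eqref{eq:L_additive}, and the termwise moment monotonicity \eqref{eq:L_UB} — is immediate. One should also note the support/cone inclusion is the reason we may only have inequality (not equality) in the $\omega$ term, which is harmless since we want an upper bound.
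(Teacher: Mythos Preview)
Your proof is correct and follows essentially the same route as the paper: apply Proposition~\ref{prop:entropy_bound} to $XY$, split $L_r(\|XY\|^n;p,q)$ via the additivity \eqref{eq:L_additive}, and bound the $X$-contribution by $n\log t$. The only difference is bookkeeping: the paper uses the cone \emph{equality} $\cone(S_{XY})=\cone(S_Y)$ (valid since $X>0$) to identify $\log\omega(S_{XY})+\log\psi_r(p,q)+L_r(\|Y\|^n;p,q)$ directly as $h_r(Y)+\Delta_r(Y;p,q)$, whereas you keep only the inclusion, pass through $h_r(tY)+\Delta_r(tY;p,q)$, and then invoke the scale-invariance $\Delta_r(tY;p,q)=\Delta_r(Y;p,q)$; also, introducing $-\Delta_r(XY;p,q)$ and then discarding it is unnecessary, since Proposition~\ref{prop:entropy_bound} already gives the inequality you want.
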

\iflongpaper
\begin{proof}
Let $Z = XY$ and let $\set_Z$ and $\set_Y$ denote the support sets of $Z$ and $Y$, respectively. The assumption that $X$ is non-negative means that $\cone(\set_Z) = \cone(\set_Y)$. We have
\begin{align*}
h_r(Z) & \overset{(a)}{\le}  \log \omega(\set_Z)  + \log \psi_r(p,q)  + L_{r}(\|Z\|^n; p,q)\\
& \overset{(b)}{=} h_r(Y) + L_r(|X|^n; p;q)  +  \Delta_r(Y; p,q) \\
& \overset{(c)}{\le} h_r(Y)  + n \log t + \Delta_r(Y; p,q),
\end{align*}
where: (a) follows from Proposition~\ref{prop:entropy_bound}; (b) follows from \eqref{eq:L_additive} and the definition of $\Delta_r(Y ; p,q)$, and (c) follows from \eqref{eq:L_UB} and the assumption $|X|\le t$. Finally, recalling that $h_r(t Y) = h_r(Y) + n \log t$ completes the proof. 
\end{proof}
\fi


\subsection{Example with lognormal distribution} 
 If $W \sim \normal(\mu, \sigma^2)$ then the random variable $X = \exp(W)$ has a lognormal distribution with parameters $(\mu, \sigma^2)$. The R\'enyi entropy is given by
\[
h_r(X)  =  \mu   +  \frac{1}{2} \left( \frac{1-r}{r}  \right) \sigma^2 + \frac{1}{2} \log( 2 \pi r^\frac{1}{r-1}   \sigma^2 ),
\]
and the logarithm of the $s$-th moment is given by 
\[
\log \ex{|X|^s} =  \mu s\, +  \frac{1 }{2} \sigma^2 \, s^2. 
\]
With a bit of work, it can be shown that the gap of the optimal two-moment inequality does not depend on the parameters $(\mu,\sigma^2)$ and is given by 
%
\begin{align}
\Delta_r(X) & =  \log\left(  \widetilde{\Beta}\left( \frac{ r}{2(1- r)} ,  \frac{ r}{2(1- r)} \right)  \sqrt{  \frac{ r   }{4 (1-r)   }}  \right) \notag \\
& \quad  + \frac{1}{2}  -   \frac{1}{2} \log(2\pi  r^\frac{1}{r-1} ). \label{eq:Delta_lognormal}
\end{align}
The details of this derivation are given in Appendix~\ref{sec:lognormal_proof}. Meanwhile, the gap of the optimal one-moment inequality is given by
\begin{align}
\widetilde{\Delta}_r(X) & = \inf_{q  }  \bigg[ \log\left(  \widetilde{\Beta}\left( \frac{ r}{1- r} - \frac{1}{q}  , \frac{1}{q}  \right)   \frac{1}{q}  \right) + \frac{1}{2} q  \sigma^2  \bigg] \notag \\
& \quad - \frac{1}{2} \left( \frac{1-r}{r} \right) \sigma^2  -  \frac{1}{2} \log( 2 \pi r^\frac{1}{r-1} \sigma^2 ).
\end{align}

 The functions $\Delta_r(X)$ and $\widetilde{\Delta}_r(X) $ are illustrated in Figure~\ref{fig:lognormal} as a function of $r$ for various $\sigma^2$. The function $\Delta_r(X)$ is bounded uniformly with respect to $r$ and converges to zero  as $r$ increases to one. 
 The tightness of the two-moment inequality in this regime follows from the fact that the lognormal distribution maximizes Shannon entropy subject to a constraint on $\ex{ \log X}$.    By contrast, the function $\widetilde{\Delta}_r(X)$ varies with the parameter $\sigma^2$. For any fixed $r \in (0,1)$, it can be shown that $\widetilde{\Delta}_r(X)$ increases to infinity if $\sigma^2$ converges to zero or infinity.

\begin{figure}
\centering
\begin{tikzpicture}
    	\node[anchor=south west,inner sep=0] at (0,0) {\includegraphics{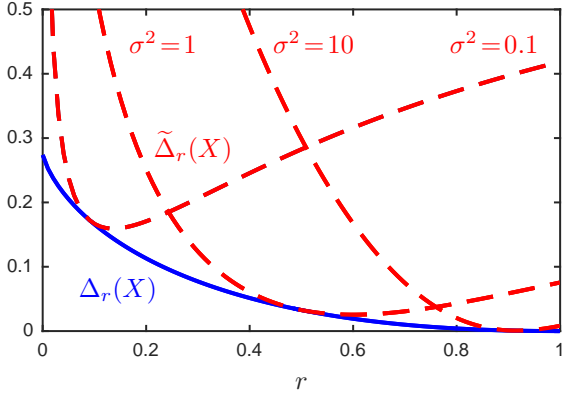}};
        \node[ align =center, text = {rgb:red,0;green,0;blue,1}] at (1.5,1.4) { $\Delta_r(X)$};
 	\node[ align =center, text = {rgb:red,1;green,0;blue,0}] at (2.5,3.3) { $\widetilde{\Delta}_r(X)$};
	\node[ align =center, text = {rgb:red,1;green,0;blue,0}] at (4.1,4.7) { $\sigma^2\! = \!10$};
	\node[ align =center, text = {rgb:red,1;green,0;blue,0}] at (2.1,4.7) { $\sigma^2 \!= \!1$};
	\node[ align =center, text = {rgb:red,1;green,0;blue,0}] at (6.5,4.7) { $\sigma^2\! =\! 0.1$};
\end{tikzpicture}
\vspace*{-.3 cm}
\caption{\label{fig:lognormal} Comparison of upper bounds on R\'enyi entropy for the lognormal distribution as a function of the order $r$ for various $\sigma^2$.} 
\label{default}
\end{figure}

\begin{figure}
\centering
\begin{tikzpicture}
    	\node[anchor=south west,inner sep=0] at (0,0) {\includegraphics{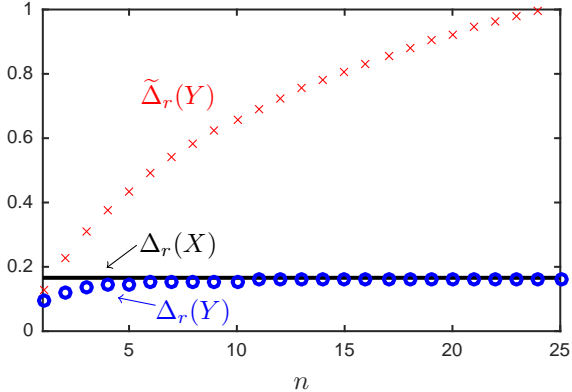}};
       \draw [<-,color = {rgb:red,0;green,0;blue,1}] (1.5, 1.3) -- (2,1.2);
        \node[ align =center, text = {rgb:red,0;green,0;blue,1}] at (2.5,1.1) { $\Delta_r(Y)$};
 	\node[ align =center, text = {rgb:red,1;green,0;blue,0}] at (2.3,4.0) { $\widetilde{\Delta}_r(Y)$};
	    \draw [->] (1.75, 2.0) -- (1.35,1.7);
	\node[ align =center] at (2.3,2.0) { $\Delta_r(X)$};
\end{tikzpicture}
\vspace*{-.3 cm}
\caption{\label{fig:gaussian} Comparison of upper bounds on R\'enyi entropy for the multivariate Gaussian distribution $\normal(0 , I_n)$ as a function of the dimension $n$ with $r = 0.1$. The solid black line is the gap of the optimal two-moment inequality for the lognormal distribution.}

\label{default}
\end{figure}

\subsection{Example with multivariate Gaussian distribution} 

Next, we consider the case where $Y\sim \normal(0, I_n)$ is an $n$-dimensional Gaussian vector with mean zero and identity covariance.  The R\'enyi entropy is given by
\[
h_r(Y)  = \frac{n}{2} \log( 2 \pi r^\frac{1}{r-1} ) ,
\]
and the $s$-th moment of the magnitude $\|X\|$ is given by 
\[
\ex{\|Y\|^s}   = \frac{ 2^\frac{s}{2} \Gamma( \frac{n + s}{2} )}{\Gamma( \frac{n}{2} )} .
\]

As the dimension $n$ increases, it can be shown that the gap of the optimal two-moment inequality converges to the gap for the lognormal distribution. The proof of the following result is given in Appendix~\ref{sec:Delta_Gaussian_proof}. 
\begin{prop}\label{prop:Delta_Gaussian_proof} 
If $Y \sim \normal(0, I_n)$ then, 
\[
\lim_{n \to \infty} \Delta_r(Y )  = \Delta_r(X),
\]
where $X$ has a lognormal distribution.
\end{prop}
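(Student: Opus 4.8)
The plan is to compute $\Delta_r(Y)$ explicitly as an infimum over the pair $(p,q)$ and show that as $n\to\infty$ the optimizing pair, after rescaling, converges to the optimizing pair for the lognormal case, with all the constituent terms converging to their lognormal counterparts. First I would write out $\Delta_r(Y;p,q)$ using Proposition~\ref{prop:entropy_bound}: with $S=\reals^n$ we have $\log\omega(S)=\frac n2\log\pi-\log\Gamma(\frac n2+1)$, the $\log\psi_r(p,q)$ term is given by \eqref{eq:psi_r}, and $L_r(\|Y\|^n;p,q)$ involves $\log\ex{\|Y\|^{np}}$ and $\log\ex{\|Y\|^{nq}}$, which by the given moment formula equal $\frac{np}{2}\log 2+\log\Gamma(\frac{n+np}{2})-\log\Gamma(\frac n2)$ and similarly for $q$. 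Subtracting $h_r(Y)=\frac n2\log(2\pi r^{1/(r-1)})$ gives an exact expression for $\Delta_r(Y;p,q)$ purely in terms of $\Gamma$-functions, $\lambda=(q+1-1/r)/(q-p)$, and the $\widetilde{\Beta}$ term.

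The key step is the asymptotic analysis. The natural scaling is to set $p = \frac{2a}{n}+\frac{1-r}{r}\cdot\frac{?}{}$ — more precisely, since the exponents appearing are $np$ and $nq$ and the lognormal gap \eqref{eq:Delta_lognormal} features $\widetilde\Beta$ evaluated at $\frac{r}{2(1-r)},\frac{r}{2(1-r)}$, I expect the optimal $p,q$ in the Gaussian problem to satisfy $p\to 0$ and $q\to 0$ like $c/\sqrt n$ around the point $1/r-1$ scaled appropriately; concretely one substitutes $p = \frac1n(\frac{1-r}{r}\cdot n\text{-free part})$. I would instead change variables so that $u = np/2$ stays order-$\sqrt n$ and use Stirling's formula $\log\Gamma(z+w)=\log\Gamma(z)+w\log z+\frac{w^2}{2z}+O(w^3/z^2)$ with $z=\frac n2$ and $w=\frac{np}{2}$ (and similarly for $q$). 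The dominant terms $\frac n2\log\frac n2$ from Stirling in $\log\Gamma(\frac{n+np}{2})$ must cancel against $\log\omega(S)$ and $h_r(Y)$; the surviving terms reproduce exactly the lognormal moment generating function $\log\ex{|X|^s}=\mu s+\frac12\sigma^2 s^2$ with effective parameters $\mu,\sigma^2$ depending on $n$ — one checks the gap is parameter-independent (as established for the lognormal in the text) so this dependence drops out. Then $L_r(\|Y\|^n;p,q)$ converges to $L_r(X;p',q')$ for the matched lognormal moments, $\log\psi_r$ is continuous in $(p,q)$ hence converges, and the residual pieces assemble into \eqref{eq:Delta_lognormal}.

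To conclude I would argue the interchange of limit and infimum in two directions: $\limsup_n\Delta_r(Y)\le\Delta_r(X)$ by plugging the lognormal-optimal scaled pair into $\Delta_r(Y;p,q)$ and passing to the limit; and $\liminf_n\Delta_r(Y)\ge\Delta_r(X)$ by showing the family of functions $(p,q)\mapsto\Delta_r(Y;p,q)$ converges locally uniformly (on a compact neighborhood of the lognormal optimizer in the rescaled variables) to $\Delta_r(X;\cdot,\cdot)$, together with a coercivity estimate ruling out minimizing sequences escaping to the boundary $p\to 1/r-1$ or $q\to 1/r-1$ or $q\to\infty$ — there the $\widetilde\Beta$ factor blows up, giving a uniform-in-$n$ lower bound. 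The main obstacle I anticipate is precisely this last uniformity/coercivity argument: controlling $\Delta_r(Y;p,q)$ uniformly in $n$ over the unbounded parameter region so that the infimum is attained in a compact set where the Stirling expansion is valid. The pointwise Stirling asymptotics are routine; making them uniform enough to commute with $\inf_{p,q}$ is the delicate part, and I would handle it by establishing explicit two-sided bounds on $\log\Gamma$ (e.g.\ via the Binet form) rather than bare asymptotics.
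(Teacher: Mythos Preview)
Your approach is essentially the paper's: express $\Delta_r(Y;p,q)$ via Gamma functions, expand with Binet/Stirling to identify the pointwise limit with the lognormal gap, get $\limsup\le\Delta_r(X)$ by plugging in a near-optimal pair, and get $\liminf\ge\Delta_r(X)$ by an $n$-uniform coercivity bound. One correction on the scaling: the optimal $p,q$ do not tend to $0$ but to $(1-r)/r$, with $q-p\asymp 1/\sqrt n$; the paper reparametrizes as $p,q=\tfrac{1-r}{r}\mp c_\lambda\sqrt{z/n}$ so that the Gamma arguments are $\tfrac{n}{2r}+O(\sqrt n)$ (not $\tfrac n2+O(\sqrt n)$), and the surviving quadratic Stirling term is exactly $z/2$. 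For the coercivity step the paper does precisely what you anticipate, using Binet's remainder to get an explicit $n$-uniform lower bound on the second-order term (via $\varphi(x)\ge\tfrac32(x-1)^2/(x+2)$) on a region $\lambda\le 1-\epsilon$, and then treats the boundary $\lambda\to1$ separately by showing the $\widetilde\Beta$ factor blows up faster than the moment terms can compensate.
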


The functions $\Delta_r(Y)$ and $\widetilde{\Delta}_r(Y)$ are illustrated in Figure~\ref{fig:gaussian}.  Both functions are increasing in the dimension $n$. However, while $\Delta_r(Y)$ converges to a finite limit, $\widetilde{\Delta}_r(Y)$ increases without bound. For any fixed integer $n$, it can be shown that both $\Delta_r(Y)$ and $\widetilde{\Delta}_r(Y)$ converge to zero as $r$ increases to one. This behavior follows from the fact that the Gaussian distribution is the maximum entropy distribution for Shannon entropy under a second moment constraint.

\subsection{Inequalities for differential entropy}

Proposition~\ref{prop:entropy_bound} can also be used to recover some known inequalities for differential entropy by considering the limiting behavior as $r$ converges to one. For example, it is well known that the differential entropy of an $n$-dimensional random vector $X$ with finite second moment satisfies
\begin{align}
h(X) \le \frac{1}{2} \log \left( 2 \pi e \, \ex{\tfrac{1}{n}  \|X\|^2}  \right),  \label{eq:h_Gaussian_inq} 
\end{align}
with equality if and only if the entries of $X$ are i.i.d.\ zero-mean Gaussian. A generalization of this result in terms an arbitrary positive moment is given by
\begin{align}
h(X) & \le \log \frac{ \Gamma\left( \frac{n}{s} +1 \right)}{\Gamma\left( \frac{n}{2} +1 \right) } + \frac{n}{2} \log \pi   + \frac{n}{s} \log\left(e s \, \ex{\tfrac{1}{n}  \|X\|^{s}} \right), \label{eq:h_moment_inq}
\end{align}
for all $s > 0$.  Note that \eqref{eq:h_Gaussian_inq} corresponds to the case $s =2$. 

Inequality \eqref{eq:h_moment_inq} can be proved as an immediate consequence of  Proposition~\ref{prop:entropy_bound} and  the fact that $h_r(X)$ is non-increasing in $r$. Using properties of the beta function given in Appendix~\ref{sec:gamma_beta},  it is straightforward to verify that
\[
\lim_{r \to 1}  \psi_r(0,q) =    \left(e\,  q\right)^\frac{1}{q} \Gamma\left(\frac{1}{q} + 1\right), \quad \text{for all $q > 0$}.
\]
Combining this result with Proposition~\ref{prop:entropy_bound} and \eqref{eq:L_UB} leads to 
\begin{align*}
h(X) & \le \log \omega(S)  + \log \Gamma\left( \frac{1}{q}+ 1\right)  + \frac{1}{q} \log\left( e q \ex{\|X\|^{nq}}\right) .
\end{align*}
Using \eqref{eq:omega_UB} and making the substitution $ s= n q$ leads to \eqref{eq:h_moment_inq}.

Another example follows from the  fact that the lognormal distribution maximizes the differential entropy of a positive random variable $X$ subject to constraints on the mean and variance of $\log(X)$, and hence
\begin{align}
h(X) & \le \ex{ \log (X)}  + \frac{1}{2} \log\left( 2 \pi e \var( \log (X))\right), \label{eq:h_logmoment_inq}
\end{align}
with equality if  and only if $X$ is lognormal.  In Appendix~\ref{proof:eq:h_logmoment_inq}, it is shown how this inequality can be proved using our two-moment inequalities, by studying the behavior as both $p$ and $q$ converge to zero as $r$ increases to one.

%
%

\section{Mutual Information Bounds}\label{sec:MI_bounds}

\subsection{Relative entropy and chi-square divergence}
Let $P$ and $Q$  be distributions defined on common probability space that that have densities $p$ and $q$ with respect to a dominated measure $\mu$. The relative entropy (or Kullback--Leibler) divergence is defined according to
\[
\DKL{P}{Q} = \int p  \log \left( \frac{ p}{ q}  \right) \dd \mu,
\]
and the chi-square divergence is defined according to, 
\[
\chi^2(P,Q)  = \int \frac{\left( p -q \right)^2}{q}  \dd \mu .
\]

The chi-square divergence is equal to the squared $L_2$ distance between the densities scaled densities $p/\sqrt{q}$ and $\sqrt{q}$. The chi-square  can also be interpreted as the first non-zero term in the power series expansion of the relative entropy \cite[Lemma~4]{nielsen:2014}. More generally, the chi-square  provides an upper bound on the relative entropy, via 
\begin{align}
\DKL{P}{Q} & \le \log( 1+ \chi^2(P,Q)). \label{eq:MI_chi2_inq}
\end{align}
The proof of this inequality follows straightforwardly from Jensen's inequality and the concavity of the logarithm; see e.g.,~\cite[Theorem~5]{gibbs:2002}.

Given a random pair $(X,Y)$ the mutual information  between $X$ and $Y$ is defined according to
\[
I(X;Y)= \DKL{P_{X,Y}}{P_{X} \times P_Y}. 
\]
From \eqref{eq:MI_chi2_inq}, we see the the mutual information can always be upper bounded using 
\begin{align}
I(X;Y) \le \log( 1+ \chi^2(P_{X,Y},P_X \times P_Y)). \label{eq:chi2_bound}
\end{align}
The next section provides  bounds on the mutual information that can improve upon this inequality.

\subsection{Mutual information and variance of conditional density} 
Let $(X,Y)$ be a random pair such that the conditional distribution $Y$ given $X$ has a density $f_{Y|X}(y|x)$ with respect to Lebesgue measure on $\reals^n$. Note that the marginal density of $Y$ is given by $f_Y(y) = \ex{ f_{Y|X}(y |X)}$. To simplify notation, we will write $f(y|x)$ and $f(y)$ where the subscripts are implicit. The support set of $Y$ denoted by $\set_Y$. 
 
The measure of the dependence between $X$ and $Y$ that is used in our bounds can be understood in terms of the variance of the conditional density. For each $y$, the conditional density $f(y|X)$ evaluated with a random realization of $X$ is a random variable. The variance of  this random variable is given by
\[
\var(f(y|X)) = \ex{ \left( f(y |X) - f(y) \right)^2 }, 
\]
where we have used the fact that the marginal density $f(y)$ is the expectation of $f(y|X)$. The $s$-th moment of the variance of the conditional density is defined according to 
\[
V_s(Y |X)   = \int_{\set_Y}   \|y\|^{s} \var(f(y|X)) \, \dd y .
\]
The function $V_s(Y|X)$ is nonnegative and equal to zero if and only if $X$ and $Y$ are independent.

For $t \in (0,1]$  the function $\kappa(t)$ is defined according to 
\[
\kappa(t)  = \sup_{u \in (0,\infty)}  \frac{\log(1 + u)}{u^t} .
\]
Properties of this function are given in Appendix~\ref{sec:properties_kappa}, where it is shown that $1/(e\, t)  < \kappa(t) \le 1/t $ with equality on the right when $t =1$.

We are now ready to give the main results of this section, which are bounds on the mutual information. We begin with a general upper bound in terms of the variance of the conditional density. 

\begin{prop}\label{lem:MI_var_bound}
For any $0 <t \le 1$, the mutual information satisfies 
\[
I(X; Y)
\le  \kappa(t)  \int_{\set_Y}  \left[ f(y) \right]^{ 1- 2t }  \, \left[ \var(f(y \gmid X)) \right]^t \,  \dd y.
\]
\end{prop}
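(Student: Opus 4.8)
The plan is to start from the integral representation of mutual information
\[
I(X;Y) = \int_{\set_Y} \ex{ f(y\gmid X) \log\!\left( \frac{f(y \gmid X)}{f(y)}\right)} \dd y,
\]
and work pointwise in $y$. For each fixed $y$, I would like to bound the inner expectation $\ex{ f(y\gmid X)\log( f(y\gmid X)/f(y))}$ by $\kappa(t)\, [f(y)]^{1-2t}\,[\var(f(y\gmid X))]^t$. The key observation is that, conditioned on $X$, we can write $f(y\gmid X)/f(y) = 1 + U$ where $U = (f(y\gmid X) - f(y))/f(y)$ is a (signed) random variable with $\ex{U} = 0$, since $\ex{f(y\gmid X)} = f(y)$. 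Then the inner expectation equals $f(y)\, \ex{ (1+U)\log(1+U)}$, and the variance of the conditional density is $\var(f(y\gmid X)) = [f(y)]^2 \ex{U^2}$.

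The heart of the argument is the elementary inequality $(1+u)\log(1+u) \le \kappa(t)\, u^2$... wait, that is not quite what is needed because $U$ can be negative. The right bound to prove is $(1+u)\log(1+u) \le u + \kappa(t)\,|u|^{2t}\,\cdot(\text{something})$; more carefully, I would use that $(1+u)\log(1+u) - u \ge 0$ always, and compare to $|u|$-powers. Actually the cleanest route: since $\ex{U}=0$, we have $\ex{(1+U)\log(1+U)} = \ex{(1+U)\log(1+U) - U}$, and the function $\phi(u) = (1+u)\log(1+u) - u$ is nonnegative and convex on $(-1,\infty)$. I would then invoke the definition of $\kappa(t)$: from $\kappa(t) = \sup_{u>0}\log(1+u)/u^t$ one gets $\log(1+u) \le \kappa(t)\, u^t$ for $u > 0$, hence for $v \ge 0$, $\log(1+v)\cdot(1+v) \le \kappa(t)\,v^t(1+v)$; combined with Jensen applied to $t$-th powers this should give $\ex{\phi(U)} \le \kappa(t)\,(\ex{U^2})^t$ after relating $\ex{\phi(U)}$ to a bound involving $\log(1+\ex{U^2})$ via Jensen's inequality and concavity of the logarithm, exactly as in \eqref{eq:MI_chi2_inq}. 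Concretely: $\ex{\phi(U)} = \ex{(1+U)\log(1+U)}$, and by Jensen (the map $v \mapsto (1+v)\log(1+v)$... ) one reduces to $\log(1+\ex{U^2})$, then applies $\log(1+w) \le \kappa(t)\, w^t$ with $w = \ex{U^2}$.

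So the step order I would follow is: (i) write $I(X;Y)$ as the integral over $y$ of the conditional relative-entropy-like term; (ii) for fixed $y$, set $P$ to be the conditional law of $Y$ reweighted appropriately and $Q$ the marginal, and apply \eqref{eq:MI_chi2_inq} to get the inner term $\le f(y)\log(1 + \ex{U^2})$ where $\ex{U^2} = \var(f(y\gmid X))/[f(y)]^2$; (iii) apply the bound $\log(1+w)\le \kappa(t)\,w^t$ from the definition of $\kappa$ with $w = \var(f(y\gmid X))/[f(y)]^2$, yielding the inner term $\le \kappa(t)\, f(y)\,(\var(f(y\gmid X)))^t\,[f(y)]^{-2t} = \kappa(t)\,[f(y)]^{1-2t}(\var(f(y\gmid X)))^t$; (iv) integrate over $y \in \set_Y$. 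The main obstacle is step (ii): justifying that the inner term $\int \ex{f(y\gmid X)\log(f(y\gmid X)/f(y))}\,\dd y$ splits into a pointwise-in-$y$ application of the chi-square bound, i.e.\ verifying that for each fixed $y$ the quantity $\ex{f(y\gmid X)\log(f(y\gmid X)/f(y))}/f(y)$ is itself a relative entropy (of the tilted distribution that puts mass $\propto f(y\gmid x)$ on $x$ relative to $P_X$) so that \eqref{eq:MI_chi2_inq} applies directly, and then checking that the resulting chi-square is exactly $\ex{U^2} = \var(f(y\gmid X))/[f(y)]^2$. Once that identification is made, steps (iii) and (iv) are immediate from the definition of $\kappa(t)$ and Fubini/Tonelli (all integrands nonnegative, so no integrability subtlety).
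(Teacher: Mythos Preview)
Your final four-step outline is exactly the paper's proof: it writes $I(X;Y)=\int f(y)\,\DKL{P_{X\mid Y=y}}{P_X}\,\dd y$, bounds the KL by $\log(1+\chi^2)$ via \eqref{eq:MI_chi2_inq}, identifies $\chi^2(P_{X\mid Y=y},P_X)=\var(f(y\mid X))/f(y)^2$ by Bayes, and then applies $\log(1+w)\le\kappa(t)\,w^t$. Your ``tilted distribution that puts mass $\propto f(y\mid x)$ on $x$ relative to $P_X$'' \emph{is} $P_{X\mid Y=y}$, so the obstacle you flag in step~(ii) dissolves immediately, and the earlier detour through $(1+u)\log(1+u)-u$ is unnecessary.
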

\begin{proof}
We use the following series of inequalities:
\begin{align*}
I(X;Y)  & \overset{(a)}{= }\int f(y) \, \DKL{P_{X|Y=y} }{P_X}  \, \dd y\\
& \overset{(b)}{\le}  \int f(y) \, \log\left( 1+ \chi^2(P_{X|Y=y}, P_X) \right)   \, \dd y\\
& \overset{(c)}{=}  \int f(y) \, \log\left( 1+ \frac{\var(f(y \gmid X)) }{f^2(y)}  \right)   \, \dd y\\
& \overset{(d)}{\le}  \kappa(t)   \int f(y) \left(  \frac{\var(f(y \gmid X)) }{f^2(y)}  \right)^{t}    \, \dd y,
\end{align*}
where: (a) follows from the definition of mutual information; (b) follows from \eqref{eq:MI_chi2_inq}; and  (c) follows from Bayes' rule, which allows us to write the chi-square  in terms of the variance of the conditional density:
\[
\chi^2(P_{X|Y=y}, P_X)   =  \ex{ \left( \!  \frac{f(y|X)}{f(y)} - 1 \right)^2 } = \frac{ \var(f(y|X))}{f^2(y)}.
\]
Inequality (d) follows from the non-negativity of the variance and the definition of $\kappa(t)$. 
\end{proof} 

Evaluating Proposition~\ref{lem:MI_var_bound} with $t =1$ recovers the well-known inequality  $I(X;Y) \le \chi^2(P_{X,Y}, P_X \times P_Y)$. 
The next two results follow from the cases $0 < t < 1/2$ and $t = 1/2$, respectively.

\begin{prop}\label{prop:MI_entropy_inq}
For any $0 < r < 1$, the mutual information satisfies 
\[
I(X;Y)  \le  \kappa(t)   \left(  e^{ h_r(Y)}\,   V_0(Y |X) \right)^t,
\]
where  $t = (1-r)/(2-r)$. 
\end{prop}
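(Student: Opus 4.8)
The plan is to apply Proposition~\ref{lem:MI_var_bound} with the prescribed value $t = (1-r)/(2-r)$, note that $t \in (0,1/2) \subset (0,1]$ for every $r \in (0,1)$ so the proposition is applicable, and then control the resulting integral by a single application of H\"older's inequality that separates the power of the marginal density from the power of the conditional variance.

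First I would record the elementary identities $1 - 2t = r/(2-r)$ and $1 - t = 1/(2-r)$, which let me rewrite the integrand appearing in Proposition~\ref{lem:MI_var_bound} as
\[
[f(y)]^{1-2t}\,[\var(f(y\gmid X))]^{t} = \big[f^r(y)\big]^{\frac{1}{2-r}}\,\big[\var(f(y\gmid X))\big]^{\frac{1-r}{2-r}}.
\]
Then I would apply H\"older's inequality with conjugate exponents $a = 2-r$ and $b = (2-r)/(1-r)$, which satisfy $1/a + 1/b = 1$ and are chosen precisely so that the first factor raised to the power $a$ equals $f^r(y)$ and the second factor raised to the power $b$ equals $\var(f(y\gmid X))$. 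This gives
\[
\int_{\set_Y} [f(y)]^{1-2t}[\var(f(y\gmid X))]^{t}\,\dd y \le \Big( \int_{\set_Y} f^r(y)\,\dd y \Big)^{\frac{1}{2-r}} \Big( \int_{\set_Y} \var(f(y\gmid X))\,\dd y \Big)^{\frac{1-r}{2-r}}.
\]

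Next I would identify the two integrals on the right-hand side: by the definition of R\'enyi entropy, $\int_{\set_Y} f^r(y)\,\dd y = e^{(1-r)h_r(Y)}$, and by definition $\int_{\set_Y}\var(f(y\gmid X))\,\dd y = V_0(Y\gmid X)$. Substituting these and using $(1-r)/(2-r) = t$ (so also $1/(2-r) = t/(1-r)$) collapses the bound to $e^{t h_r(Y)}\,(V_0(Y\gmid X))^{t} = (e^{h_r(Y)}\,V_0(Y\gmid X))^{t}$. Combining with Proposition~\ref{lem:MI_var_bound} yields the claim. If $h_r(Y) = +\infty$ the inequality is trivial, so one may assume $\int f^r < \infty$, which also legitimizes the H\"older step.

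I do not anticipate a real obstacle here: the argument is a one-line H\"older estimate once the exponents are matched. The only point needing a moment's care is checking that $(a,b) = (2-r,\,(2-r)/(1-r))$ are indeed conjugate and reproduce exactly $f^r$ and $\var(f(y\gmid X))$, together with the final exponent bookkeeping that turns $e^{(1-r)h_r(Y)/(2-r)}$ into $e^{t h_r(Y)}$.
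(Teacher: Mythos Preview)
Your proof is correct and follows essentially the same route as the paper: start from Proposition~\ref{lem:MI_var_bound}, apply H\"older's inequality to split the integral into $\int f^r$ and $\int \var(f(y\gmid X))$, and identify the pieces. Your conjugate pair $(a,b)=(2-r,\,(2-r)/(1-r))$ is exactly the pair $(1/(1-t),\,1/t)$ that the paper uses, just written in terms of $r$ rather than $t$, and the exponent bookkeeping is the same.
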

\iflongpaper
\begin{proof}
Starting with Proposition~\ref{lem:MI_var_bound} and applying H\"older's inequality with conjugate exponents $1/(1-t) $ and $1/t$ leads to 
\begin{align*}
I(X;Y)
&  \le \kappa(t)   \left(  \int  f^r(y) \, \dd y \right)^{1-t}  \left( \int   \var(f(y \gmid X)) \,  \dd y \right)^t\\
&  = \kappa(t)  \, e^{ t\,  h_{r}(Y) }   V^t_0(Y|X),
\end{align*}
where we have used the fact that $r = (1- 2t)/(1-t)$.
\end{proof}
\fi

\begin{prop}\label{prop:MI_two_moment_inq}
For any  $p < 1 < q$, the mutual information satisfies 
\[
I(X;Y)  \le   C(\lambda) \,     \sqrt{ \frac{  \omega(\set_Y)   V^{\lambda}_{np} (Y |X) V^{1-\lambda}_{nq} (Y |X) }{(q-p)} },
\]
where $\lambda = (q - 1)/(q-p)$ and
\[
C(\lambda) = \kappa(1/2) \sqrt{  \frac{ \pi  \lambda^{-\lambda} (1-\lambda)^{-(1-\lambda)}}{ \sin( \pi \lambda)}}.
\]
\end{prop}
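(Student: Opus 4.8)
The plan is to combine Proposition~\ref{lem:MI_var_bound} at $t = 1/2$ with the $n$-dimensional two-moment inequality (Proposition~\ref{prop:two_moment_n}) applied to an appropriate nonnegative function of $y$. First I would specialize Proposition~\ref{lem:MI_var_bound} to $t = 1/2$, which gives
\[
I(X;Y) \le \kappa(1/2) \int_{\set_Y} \sqrt{ \var(f(y \gmid X)) } \, \dd y.
\]
The right-hand integral is, up to the constant $\kappa(1/2)$, exactly $\|g\|_{1/2}^{1/2}$ where $g(y) = \sqrt{\var(f(y\gmid X))}$ is a nonnegative function on $\set_Y \subseteq \reals^n$: indeed $\|g\|_{1/2} = \left( \int g^{1/2} \right)^2 = \left( \int \var(f(y\gmid X))^{1/4}\,\dd y\right)^2$, so I should instead write the bound directly as $\int \sqrt{\var} \, \dd y = \| h \|_{1/2}^{1/2}$ with $h = \var(f(\cdot\gmid X))$, i.e.\ recognize $\int \var^{1/2} = \|h\|_{r}^{r}$ with $r = 1/2$ applied to $h = \var(f(\cdot\mid X))$. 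Then $\int_{\set_Y} \sqrt{\var(f(y\gmid X))}\,\dd y = \|h\|_{1/2}^{1/2}$ is bounded by Proposition~\ref{prop:two_moment_n}.

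Concretely, I would apply Proposition~\ref{prop:two_moment_n} to $h = \var(f(\cdot \gmid X))$ with $r = 1/2$. The constraint $p < 1/r - 1 < q$ becomes $p < 1 < q$, matching the hypothesis. The proposition then yields
\[
\|h\|_{1/2} \le \left[ \omega(\set_Y)\, \psi_{1/2}(p,q) \right]^{(1-r)/r} [\mu_{np}(h)]^{\lambda} [\mu_{nq}(h)]^{1-\lambda},
\]
with $(1-r)/r = 1$ and $\lambda = (q + 1 - 1/r)/(q-p) = (q-1)/(q-p)$, exactly the $\lambda$ in the statement. The moments of $h$ are $\mu_{ns}(h) = \int_{\set_Y} \|y\|^{ns} \var(f(y\gmid X))\,\dd y = V_{ns}(Y|X)$ by definition, so $\mu_{np}(h) = V_{np}(Y|X)$ and $\mu_{nq}(h) = V_{nq}(Y|X)$. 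Hence $\int \sqrt{\var}\,\dd y = \|h\|_{1/2}^{1/2} \le \left[\omega(\set_Y)\psi_{1/2}(p,q)\right]^{1/2} V_{np}^{\lambda/2}(Y|X) V_{nq}^{(1-\lambda)/2}(Y|X)$.

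Finally I would substitute the closed form \eqref{eq:psi_one_half}, $\psi_{1/2}(p,q) = \pi\lambda^{-\lambda}(1-\lambda)^{-(1-\lambda)}/((q-p)\sin(\pi\lambda))$, and collect constants: multiplying through by $\kappa(1/2)$ and taking the square root of $\psi_{1/2}$ produces precisely $C(\lambda) = \kappa(1/2)\sqrt{\pi\lambda^{-\lambda}(1-\lambda)^{-(1-\lambda)}/\sin(\pi\lambda)}$ together with the factor $1/\sqrt{q-p}$ pulled outside, matching the claimed bound. I do not anticipate a genuine obstacle here — the argument is a direct chaining of three results already in hand — but the one place to be careful is the bookkeeping: making sure the function fed into Proposition~\ref{prop:two_moment_n} is $\var(f(\cdot\gmid X))$ itself (not its square root), so that $\|h\|_{1/2}^{1/2}$ is the integral of $\sqrt{\var}$ appearing in the $t=1/2$ case of Proposition~\ref{lem:MI_var_bound}, and tracking the exponents $(1-r)/r = 1$ and the identification $\lambda = (q-1)/(q-p)$ so the constants line up with $\psi_{1/2}$ and $C(\lambda)$.
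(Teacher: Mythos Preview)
Your proposal is correct and follows essentially the same route as the paper: specialize Proposition~\ref{lem:MI_var_bound} to $t=1/2$, apply Proposition~\ref{prop:two_moment_n} with $r=1/2$ to the function $h(y)=\var(f(y\gmid X))$ so that $\int\sqrt{\var}=\|h\|_{1/2}^{1/2}$, and then substitute the closed form~\eqref{eq:psi_one_half} for $\psi_{1/2}(p,q)$. Your self-correction about feeding $\var(f(\cdot\gmid X))$ (not its square root) into the two-moment inequality is exactly the right bookkeeping point, and after that the constants line up as you describe.
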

\begin{proof}
Evaluating Proposition~\ref{lem:MI_var_bound} with $t  = 1/2$ gives
\[
I(X;Y) \le \kappa(1/2) \int_{\set_Y} \sqrt{  \var(f(y \gmid X))} \, \dd y .
\]
Evaluating Proposition~\ref{prop:two_moment_n} with $r  = 1/2$ leads to
\begin{align*}
\MoveEqLeft \left(  \int_{\set_Y} \sqrt{  \var(f(y \gmid X))} \, \dd y\right)^2\\
& \le   \omega(S_Y) \, \psi_{1/2}(p,q) V_{np}^{\lambda}(Y |X) V_{nq}^{1-\lambda} (Y |X).
\end{align*}
Combining these inequalities with the expression for $\psi_{1/2}(p,q)$ given in \eqref{eq:psi_one_half} completes the proof. 
\end{proof}

The contribution of Propositions~\ref{prop:MI_entropy_inq} and \ref{prop:MI_two_moment_inq} is that they provide bounds on the mutual information in terms of quantities that can be easy to characterize. One application of these bounds is to establish conditions under which the mutual information corresponding to a sequence of random pairs $(X_k,Y_k)$ converges to zero.  
%
In this case, Proposition~\ref{prop:MI_entropy_inq} provides a sufficient condition in terms of the R\'enyi entropy of $Y_n$ and the function $V_0(Y_n |X_n)$, while
%
 Proposition~\ref{prop:MI_two_moment_inq} provides a sufficient condition in terms of $V_{s}(Y_n |X_n)$ evaluated with two difference values of $s$. These conditions are summarized in the following result. 

\begin{prop}\label{prop:sufficient_conditions}
Let $(X_k,Y_k)$ be a sequence of random pairs such the conditional distribution $Y_k$ given $X_k$ has a density on $\reals^n$. The following are sufficient conditions under which the mutual information of $I(X_k; Y_k)$ converges to zero as $k$ increases to infinity:
\begin{enumerate}[(i)]
\item There exists $0 < r< 1$ such that
\begin{align}
\lim_{k \to \infty}e^{h_r(Y_k)}  V_{0}(Y_k |X_k) & = 0. \notag
\end{align}
\item There exists $p < 1< q$ such that
\begin{align}
\lim_{k \to \infty} V^{q-1}_{np}(Y_k |X_k)  V^{1-p}_{nq}(Y_k |X_k)  = 0.  \notag
\end{align}
\end{enumerate}
\end{prop}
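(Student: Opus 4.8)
The plan is to derive each sufficient condition directly from the corresponding mutual information bound established earlier, namely Proposition~\ref{prop:MI_entropy_inq} for part (i) and Proposition~\ref{prop:MI_two_moment_inq} for part (ii), and then invoke the fact that $\kappa(t)$ is a fixed finite constant (shown in Appendix~\ref{sec:properties_kappa} to satisfy $\kappa(t) \le 1/t$) together with the fact that raising a nonnegative sequence to a fixed positive power preserves convergence to zero.

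For part (i): fix the value of $r \in (0,1)$ provided by the hypothesis and set $t = (1-r)/(2-r)$, noting $0 < t < 1/2$. Proposition~\ref{prop:MI_entropy_inq} gives $I(X_k;Y_k) \le \kappa(t)\bigl(e^{h_r(Y_k)} V_0(Y_k|X_k)\bigr)^t$. Since $\kappa(t)$ does not depend on $k$ and the map $u \mapsto u^t$ is continuous at $0$, the assumed limit $e^{h_r(Y_k)} V_0(Y_k|X_k) \to 0$ forces the right-hand side to zero; combined with $I(X_k;Y_k) \ge 0$, the squeeze theorem yields $I(X_k;Y_k) \to 0$.

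For part (ii): fix $p < 1 < q$ from the hypothesis, set $\lambda = (q-1)/(q-p) \in (0,1)$, and apply Proposition~\ref{prop:MI_two_moment_inq}, which gives
\[
I(X_k;Y_k) \le C(\lambda)\sqrt{\frac{\omega(\set_Y)\, V^{\lambda}_{np}(Y_k|X_k)\, V^{1-\lambda}_{nq}(Y_k|X_k)}{q-p}}.
\]
Here $C(\lambda)$, $\omega(\set_Y)$ (bounded by $\pi^{n/2}/\Gamma(n/2+1)$ via \eqref{eq:omega_UB}), and $q-p$ are all finite constants independent of $k$. The quantity under the square root is, up to that constant, $V^{\lambda}_{np}(Y_k|X_k) V^{1-\lambda}_{nq}(Y_k|X_k)$. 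Multiplying this by the exponent $1/(1-p) > 0$ (note $1-p > 0$ since $p < 1$) and observing $\lambda/(1-p) = (q-1)/((q-p)(1-p))$ and $(1-\lambda)/(1-p) = (q-p-q+1)/((q-p)(1-p)) = (1-p)/((q-p)(1-p))$... the cleaner route is simply to note that $V^{\lambda}_{np} V^{1-\lambda}_{nq} \to 0$ is \emph{equivalent} to $\bigl(V^{\lambda}_{np} V^{1-\lambda}_{nq}\bigr)^{1/(1-p)} = V^{(q-1)/((q-p)(1-p))}_{np} V^{1/(q-p)}_{nq} \to 0$; but a more direct match to the stated condition is obtained by raising to the power $(q-p)$, giving that $V^{q-1}_{np}(Y_k|X_k) V^{1-p}_{nq}(Y_k|X_k) \to 0$ is equivalent to $V^{\lambda}_{np}(Y_k|X_k) V^{1-\lambda}_{nq}(Y_k|X_k) \to 0$, since $(q-p)\lambda = q-1$ and $(q-p)(1-\lambda) = 1-p$ and $x \mapsto x^{1/(q-p)}$ is continuous at $0$. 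Hence the assumed limit drives the right-hand side of the bound to zero, and again the squeeze theorem gives $I(X_k;Y_k) \to 0$.

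I do not anticipate any genuine obstacle: both parts are essentially immediate corollaries of the already-proven bounds, the only care needed being (a) confirming $\kappa(t)$ and the geometric constants are finite and $k$-independent, and (b) matching the exponents in the stated hypothesis of part (ii) to those appearing in Proposition~\ref{prop:MI_two_moment_inq} via the algebraic identities $(q-p)\lambda = q-1$ and $(q-p)(1-\lambda) = 1-p$. The mildly fussy point is that $n$ is a fixed dimension throughout, so $\omega(\set_Y)$ is bounded by a constant; this is already guaranteed by \eqref{eq:omega_UB}. No additional regularity beyond the stated existence of conditional densities is required, consistent with the simplicity emphasized in the introduction.
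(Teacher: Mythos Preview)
Your proposal is correct and matches the paper's intent: the paper does not write out a separate proof of this proposition, presenting it instead as an immediate summary of Propositions~\ref{prop:MI_entropy_inq} and~\ref{prop:MI_two_moment_inq}, which is exactly the route you take. Your handling of the exponent matching in part~(ii) via $(q-p)\lambda = q-1$ and $(q-p)(1-\lambda) = 1-p$, and your use of \eqref{eq:omega_UB} to bound $\omega(\set_{Y_k})$ uniformly in $k$, are the only points requiring care, and you address both.
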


\subsection{Properties of the bounds}\label{sec:prop_V}

The function $V_s(Y|X)$ has a number of interesting properties. The variance of the conditional density  can be expressed in terms of an expectation with respect to two independent random variables $X_1$ and $X_2$ with the same distribution as $X$ via the decomposition:
\[
\var(f(y|X)) = \ex{  f(y |X)  f(y |X)  -  f(y |X_1)  f(y |X_2) }.
\]
Consequently, by swapping the order of the integration and expectation we obtain 
\begin{equation}
V_s(Y |X)   = \ex{ K_s(X,X) - K_s(X_1,X_2)}, \label{eq:Vp_alt} 
\end{equation} 
where
\begin{equation*}
K_s(x_1,x_2) =  \int \|y\|^s f(y |x_1)  f(y |x_2) \, \dd y.
\end{equation*}
The function $K_s(x_1,x_2)$ is a positive definite kernel that does not depend on the distribution of $X$. For $s=0$, this kernel has been studied previously in the machine learning literature \cite{jebara:2004}, where  it is  referred to as the expected likelihood kernel.  


The variance of the conditional density also satisfies a data-processing inequality. Suppose that $U \to X \to Y$ forms a Markov chain. Then, the square of the conditional density of $Y$ given $U$ can be expressed as 
\[
f^2_{Y|U}(y|u) =  \ex{ f_{Y|X}(y|X'_1) f_{Y|X}(y|X'_2) \gmid U = u} ,
\]
where  $(U, X'_1,X'_2) \sim P_{U} P_{X_1 | U} P_{X_2| U}$. Combining this expression with \eqref{eq:Vp_alt} yields  
\begin{align}
V_s(Y |U)   = \ex{ K_s(X'_1,X'_2) - K_s(X_1,X_2)},  \label{eq:V_YgU} 
\end{align}
where we recall that $(X_1,X_2)$ are independent copies of $X.$ 

%

Finally, it is easy to verify that the function $V_s(Y)$ satisfies 
\begin{equation*}
V_{s}(a Y | X) = |a|^{s - n} V_s(Y|X), \quad \text{ for all $a \ne 0$}.
\end{equation*}
Using this scaling relationship we see that the sufficient conditions in Proposition~\ref{prop:sufficient_conditions} are invariant to scaling of $Y$. 

\subsection{Example with Gaussian noise}

We now provide a specific example of our bounds on the mutual information.  Let $(X,Y)$ be distributed according to
\begin{equation}
Y = X + W,  \label{eq:AWGN}
\end{equation}
were $W \sim \normal(0,1)$ is independent of $X$. In this case, it is well known that the mutual information satisfies 
\begin{equation}
I(X;Y) \le \frac{1}{2} \log(1 + \var(X)),
\end{equation}
where equality is attained is $X$ is Gaussian. This inequality follows straightforwardly from the fact that the Gaussian distribution maximizes differential entropy subject to a second moment constraint. 
One of the limitations of this bound is that it can be loose when the second moment is dominated by events that have small probability. In fact, it is easy to construct examples for which $X$ does not have a finite second moment and yet $I(X;Y)$ is arbitrarily close to zero.


Our results provide bounds on  $I(X;Y)$ that are significantly less sensitive to the effects of rare events. To begin, observe that the product of the conditional densities can be factored according to
\begin{align*}
 f(y | x_1) f(y|x_2)
  & = \phi\left(  \sqrt{2} \, y - \frac{x_1 + x_2}{\sqrt{2}}  \right)  \phi\left( \frac{x_1 - x_2}{\sqrt{2}}   \right),
\end{align*}
where $\phi(x) = (2\pi)^{-1/2} \exp(- x^2/2)$ is the density of the standard Gaussian distribution. Integrating with respect to $y$ leads to 
\begin{align*}
K_s(x_1,x_2)   & =
 2^{-\frac{1+s}{2}}  \ex{ \left|W + \frac{x_1 + x_2 }{\sqrt{2}} \right|^s}\phi\left( \frac{x_1 - x_2}{\sqrt{2}}   \right).
\end{align*}
For the case $s=0$, the function $K_0(x_1,x_2)$ is proportional the standard Gaussian kernel and we have
\[
V_0(Y|X)  = \frac{1}{2 \sqrt{ \pi}} \left[ 1 - \ex{ e^{ - \frac{1}{4}  \left( X_1 - X_2   \right)^2  }} \right].
\]
This expression shows that $V_0(Y|X)$ is a measure of the variation in $X$. 

A useful property of $V_{0}(Y|X)$ is that the conditions under which  it converges to zero are weaker than the conditions needed for other measures of variation, such that variance. To see why, observe that the expectation is bounded uniformly with respect to $(X_1,X_2)$. In particular, for every $\eps > 0$ and $x \in \reals$, we have
\begin{align*}
 1 - \ex{ e^{- \frac{1}{4}  \left( X_1 - X_2   \right)^2  }}  
 & \le \eps^2 + 2 \pr{|X -x| \ge \eps},
\end{align*}
where we have used the inequality $1 - e^{-x} \le x$ and the fact that $\pr{ |X_1 - X_2| \ge 2 \eps} \le 2\pr{ |X - x| \ge \eps}$. Therefore, $V_0(Y|X)$ is small provided that $X$ is close a constant value with high probability.

To study some further properties of these bounds, we now focus on the case where $X$ is a Gaussian scalar mixture generated according to
\begin{equation}
X = A \sqrt{U}  , \quad A \sim \normal(0,1), \quad U \ge 0, \label{eq:scale_mixture}
\end{equation}
with $A$ and $U$ independent.  
In this case, the expectations with respect to the kernel $K_s(x_1,x_2)$ can be computed explicitly, leading to 
\begin{align*}
V_s(Y|X)  =\cramped{ \frac{\Gamma( \frac{1 + s}{2})}{2 \pi}  }\ex{ \cramped{\left(1  + 2 U\right)^\frac{s}{2}} -  \frac{ (1 +U_1)^\frac{s}{2} ( 1+ U_2)^\frac{s}{2}  }{ \cramped{ (1 + \frac{1}{2}  (U_1 + U_2) )^\frac{s+1}{2} }}  }.
\end{align*}
It can be shown that this expression depends primarily on the magnitude of $U$. This is not surprising given that $X$ converges to a constant if and only if $U$ converges to zero. 


Our results can also be used to bound the mutual information $I(U;Y)$ by noting that $U \to X \to Y$ forms a Markov chain, and taking advantage of the characterization provided in  \eqref{eq:V_YgU}.  Letting $X_1' = A_1 \sqrt{U}$ and $X_2' = A_2 \sqrt{U}$ with $(A_1,A_2, U)$ mutually independent, leads to
 \begin{align*}
V_s(Y|U)  =\cramped{ \frac{\Gamma( \frac{1 + s}{2})}{2 \pi} } \ex{ \cramped{\left(1  + U\right)^\frac{s-1}{2}} -  \frac{ (1 +U_1)^\frac{s}{2} ( 1+ U_2)^\frac{s}{2}  }{ \cramped{ (1 + \frac{1}{2}  (U_1 + U_2) )^\frac{s+1}{2} }}   }.
\end{align*}
In this case, $V_s(Y|U)$ is a measure of the variation in $U$. To study it behavior, we  consider the simple upper bound
\begin{align*}
V_s(Y|U)   \le  \frac{\Gamma( \frac{1 + s}{2})}{2 \pi}   \pr{ U_1 \ne U_2} \ex{ \left(1  +U \right)^\frac{s-1}{2} }. 
\end{align*}
This bound shows that if $s \le 1$ then $V_s(Y|U)$ is bounded uniformly  with respect to distributions on $U$, and if $s >1$ then $V_s(Y|U)$  is bounded in terms of the $(\frac{s-1}{2})$-th moment of $U$.

In conjunction with Propositions~\ref{prop:MI_entropy_inq} and \ref{prop:MI_two_moment_inq} the function $V_{s}(Y|U)$ provide bounds on the mutual information $I(U;Y)$ that can be expressed in terms of simple expectations involving two independent copies of $U$. Figure~\ref{fig:MI_bound} provides an illustration of the upper bound in  Proposition~\ref{prop:MI_two_moment_inq} for the case where  $U$ is a discrete random variable supported on two-points and $X$ and $Y$  are generated according to \eqref{eq:AWGN} and \eqref{eq:scale_mixture}.  This example shows that there exist sequences of distributions for which our upper bounds on the mutual information converges to zero while the chi-square divergence between $P_{XY}$ and $P_X \times P_Y$ is bounded away from zero. 

\begin{figure}
\centering
\begin{tikzpicture}
    	\node[anchor=south west,inner sep=0] at (0,0) {\includegraphics[height=2in]{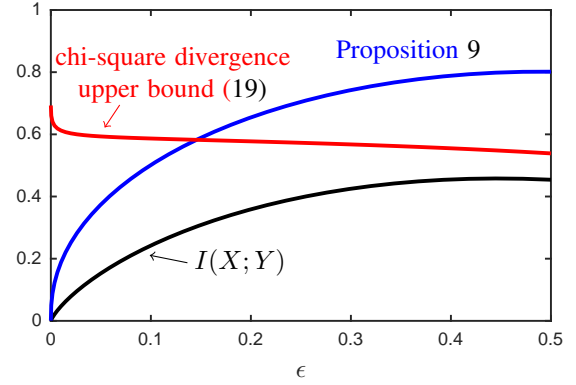}};
        \node[ align =center, text = {rgb:red,0;green,0;blue,1}] at (5.3,4.4) { Proposition~\ref{prop:MI_two_moment_inq} };
                \node[ align =center, text = {rgb:red,1;green,0;blue,0}] at (2.1,4.1) {chi-square divergence\\ upper bound \eqref{eq:chi2_bound} };
                   \draw [<-,color = {rgb:red,1;green,0;blue,0}] (1.2, 3.4) -- (1.4,3.7);
              \node[ align =center] at (3,1.6) { $I(X;Y)$};
       \draw [<-] (1.8, 1.7) -- (2.3,1.6);
  
\end{tikzpicture}
\vspace*{-.2cm}
\caption{\label{fig:MI_bound} Bounds  on the mutual information $I(U;Y)$ when $U \sim (1-\eps) \delta_1  + \eps \delta_{a(\eps)}$, with $a(\eps) = 1 + 1/\sqrt{\eps}$, and $X$ and $Y$  are generated according to \eqref{eq:AWGN} and \eqref{eq:scale_mixture}. The bound from Proposition~\ref{prop:MI_two_moment_inq} is evaluated with $p=0$ and $q = 2$. }
\label{default}
\vspace*{-.4cm}
\end{figure}

\section{Conclusion} 

This paper provides bounds on R\'enyi entropy and mutual information that are based on a relatively simple two-moment inequality. One of the main takeaways from our analysis is that sometimes two carefully chosen moments are all that is needed to provide an accurate characterization. Extensions to inequalities with more moments are also worth exploring.

\appendices

\section{The Gamma and Beta Functions}\label{sec:gamma_beta}

This section reviews some properties of the gamma and beta functions. For  $x > 0$, the gamma function is defined according to 
$ \Gamma(x) = \int_0^\infty t^{x-1} e^{-t} \, \dd t.$ 
Binet's formula the logarithm of the gamma function \cite[Theorem 1.6.3]{andrews:1999} gives
\begin{align}
\log \Gamma(x) = \left( x - \frac{1}{2} \right) \log x  - x + \frac{1}{2} \log(2 \pi)+ \theta(x),  \label{eq:binet} 
\end{align}
where the remainder term $\theta(x)$ is convex and non-increasing with $\lim_{x \to 0} \theta(x) = \infty$ and $\lim_{x \to \infty} \theta(x) = 0$. Euler's reflection formula \cite[Theorem 1.2.1]{andrews:1999} gives
\begin{align}
\Gamma(x) \Gamma(1-x) = \frac{ \pi}{ \sin (\pi x) } . \label{eq:euler} 
\end{align}

For $x,y > 0$ the beta function is defined according to $\Beta(x,y) = \Gamma(x) \Gamma(y) / \Gamma(x+y)$. 
The beta function  can also be expressed in integral form as \cite[pg.~7]{andrews:1999}
\begin{align}
\Beta(x,y) 
=  \int_0^\infty \frac{s^{x-1}}{(1+s)^{x+y}} \, \dd s. \label{eq:beta_alt}
\end{align}
Recall that  $\widetilde{\Beta}(x,y) = \Beta(x,y) (x+y)^{x+y} x^{-x} y^{-y}$. Using \eqref{eq:binet} leads to 
\begin{align}
 \log\left(  \widetilde{\Beta}(x,y)  \sqrt{ \frac{x\, y}{2 \pi( x\!+\!y) } } \right) =  \theta(x)  +\theta(y)  - \theta(x+y). \label{eq:Beta_tilde_alt}
\end{align}
It can also be shown that \cite[Equation~(2) pg.~2]{grenie:2015}
\begin{align}
  \widetilde{\Beta}(x,y)  \ge \frac{x + y}{x y}
 . \label{eq:Beta_tilde_alt2}
\end{align}

%
%

\section{Details for R\'enyi Entropy Examples} 

This appendix studies properties of the two-moment inequalities for R\'enyi entropy described in Section~\ref{eq:entropy_bounds}.  

\subsection{Lognormal distribution} \label{sec:lognormal_proof}

Let $X$ be a lognormal random variable with parameters $(\mu,\sigma^2)$ and consider the parametrization
\begin{align*}
p &= \frac{1-r}{r} - (1-\lambda)  \sqrt{ \frac{(1-r) \, u }{r\lambda(1-\lambda) }   }  \\
q &= \frac{1-r}{r}  +\lambda  \,  \sqrt{  \frac{(1-r) \,u  }{r\lambda (1-\lambda)}   }.
\end{align*}
where $\lambda \in(0,1)$ and $u \in (0,\infty)$. Then, we have
\begin{align*}
 \psi_r(p,q)  & = \widetilde{\Beta}\left(\frac{r \lambda}{1-r} , \frac{r (1-\lambda)}{1-r} \right)   \sqrt{  \frac{r \lambda (1-\lambda)  }{(1-r)  \, u  }} \\
L_r(X;p,q)  & =  \mu  + \frac{1}{2} \left( \frac{1-r}{ r} \right) \sigma^2 + \frac{1}{2} u \sigma^2.
\end{align*}
Combining these expressions with \eqref{eq:Beta_tilde_alt} leads to 
\begin{align}
\Delta_r(X;p,q) 
&=  \theta\Big( \frac{r \lambda}{1\!-\!r} \Big) + \theta\Big( \frac{r (1\!-\!\lambda)}{1-r} \Big) - \theta\Big( \frac{r}{1\!-\!r} \Big) \notag \\
& \quad + \frac{1}{2} u \sigma^2    -  \frac{1}{2} \log\left( u  \sigma^2\right) - \frac{1}{2} \log( r^\frac{1}{r-1})\label{eq:Delta_lognormal_c}.
\end{align}

We now characterize the minimum with respect to the parameters $(\lambda, u)$. Note that the mapping $\lambda \mapsto \theta( \frac{r \lambda}{1-r}) + \theta( \frac{r (1-\lambda)}{1-r})$ is convex and symmetric about the point $\lambda = 1/2$. Therefore, the minimum with respect to $\lambda$ is attained at $\lambda = 1/2$. Meanwhile,  mapping $u \mapsto u \sigma^2 - \log(u \sigma^2)$ is convex and attains it minimum at $u = 1/\sigma^2$. Evaluating \eqref{eq:Delta_lognormal_c} with these values, we see that the optimal two-moment inequality can be expressed as
\begin{align*}
\Delta_r(X) & = 2 \theta\left( \frac{r}{2(1-r)} \right)  - \theta\left( \frac{r}{1-r} \right) + \frac{1}{2} \log\left( e\,  r^\frac{1}{1-r} \right). 
\end{align*}
By \eqref{eq:Beta_tilde_alt}, this expression is equivalent to \eqref{eq:binet}. Moreover, the fact that $\Delta_r(X)$ decreases to zero as $r$ increases to one follows from the fact that $\theta(x)$ decreases to zero and $x$ increases to infinity.  

Next, we express gap in terms of the pair $(p,q)$.  Comparing the difference between $\Delta_r(X;p,q)$ and $\Delta_r(X)$ leads to
\begin{align*}
\Delta_r(X; p,q) & =   \Delta_r(X) + \frac{1}{2} \varphi\left( \frac{r \lambda (1-\lambda) }{1-r} (q - p)^2  \sigma^2 \right) \\
& \quad + \theta\Big( \frac{r \lambda}{1\!-\!r} \Big) + \theta\Big( \frac{r (1\!-\!\lambda)}{1-r} \Big)  - 2 \theta\Big( \frac{r}{2(1\!-\!r)} \Big),
\end{align*}
where $\varphi(x) = x  - \log(x) - 1$. In particular, if $p =0$, then we obtain the simplified expression
\begin{align*}
 \Delta_r(X; 0,q) & =  \Delta_r(X)   + \frac{1}{2} \varphi\left( \Big( q -\frac{1-r}{r}\Big) \sigma^2  \right)  \\
& \quad +  \theta\Big( \frac{r}{1-r} - \frac{1}{q} \Big)  + \theta\Big(\frac{1}{q} \Big)  - 2 \theta\Big( \frac{r}{2(1\!-\!r)}\Big).
\end{align*}
This characterization shows that the gap of the optimal one-moment inequality $\widetilde{\Delta}_r(X)$ increases to infinity in the limit as either $\sigma^2 \to 0$ or $\sigma^2 \to \infty$.

\subsection{Multivariate Gaussian distribution}

Let  $Y \sim \normal(0,I_n)$ is an $n$-dimensional Gaussian vector and consider the parametrization
\begin{align*}
p &= \frac{1-r}{r} -  \frac{1-\lambda}{r}  \sqrt{ \frac{ 2(1-r) \, z}{\lambda (1-\lambda) \, n}    }  \\
q &= \frac{1-r}{r}  + \frac{\lambda}{r} \sqrt{ \frac{ 2(1-r) \, z}{\lambda (1-\lambda) \, n}    } .
\end{align*}
where $\lambda \in(0,1)$ and $z \in (0,\infty)$. The, we have
\begin{align*}
\log \omega(\set_Y) & = \frac{n}{2} \log \pi -  \log\left( \frac{n}{2}\right)   - \log \Gamma\left( \frac{n}{2} \right) \\
 \psi_r(p,q)  & = \widetilde{\Beta}\left(\frac{r \lambda}{1-r} , \frac{r (1-\lambda)}{1-r} \right)   \sqrt{  \frac{r \lambda (1-\lambda)  }{(1-r)  }}  \sqrt{ \frac{n r}{2z} }. 
\end{align*}
Furthermore, if 
\begin{align}
(1-\lambda) \sqrt{ \frac{ 2 (1-r) z}{ \lambda (1-\lambda) n}} <1, \label{eq:lambda_z_cond}
\end{align}
then $ L_r(\|Y\|^n ;p,q)  $ is finite and is given by 
\begin{align*}
 L_r(\|Y\|^n ;p,q)  
  & = Q_{r,n}(\lambda,z) +  \frac{n}{2} \log 2  \\
  & \quad  + \frac{r}{1-r} \left[ \log \Gamma\left( \frac{n}{2r}\right) - \log \Gamma\left( \frac{n}{2}\right) \right],
\end{align*}
where
\begin{align}
 Q_{r,n}(\lambda,z )
&  = \frac{r \lambda }{1-r}   \log \Gamma\bigg( \frac{n}{2 r} -  \frac{1-\lambda}{r} \sqrt{ \frac{(1-r) n z}{2 \lambda(1-\lambda)}  } \, \bigg)  \notag\\
&  \quad + \frac{r (1- \lambda)}{1-r}   \log \Gamma\bigg(  \frac{n}{2 r}  +  \frac{\lambda}{r} \sqrt{ \frac{(1-r)n z}{2 \lambda(1-\lambda)}  } \, \bigg) \notag \\
& \quad  -\frac{r}{1-r}  \log \Gamma\left( \frac{n}{2r} \right)  . \label{eq:Q}
\end{align}
Combining these expressions and then using \eqref{eq:binet} and \eqref{eq:Beta_tilde_alt} leads to 
\begin{align}
\Delta_r(Y;p,q) & =  \theta\Big( \frac{r \lambda}{1\!-\!r} \Big) + \theta\Big( \frac{r (1\!-\!\lambda)}{1-r} \Big) - \theta\Big( \frac{r}{1\!-\!r} \Big) \notag \\
 & \quad + Q_{r,n}(z,\lambda) - \frac{1}{2} \log z  - \frac{1}{2} \log\left( r^\frac{1}{r-1} \right) \notag \\
& \quad +\frac{r}{1-r} \theta\left( \frac{n}{2r} \right) - \frac{1}{1-r} \theta\left( \frac{n}{2} \right). \label{eq:Delta_Gaussian_c}  
\end{align}

Next, we study some properties of  $Q_{r,n}(\lambda,z)$. The decommissions \eqref{eq:binet} shows that the logarithm of the gamma function can expressed as the sum of convex functions:
\begin{align*}
\log \Gamma(x) = \varphi(x)  + \frac{1}{2} \log\left( \frac{1}{x} \right)  + \frac{1}{2} \log ( 2\pi) - 1 + \theta(x),
\end{align*}
where $\varphi(x) = x \log x   + 1  - x$. Starting with the definition of $Q(\lambda, z)$ and then using Jensen's inequality yields 
\begin{align*}
&  Q_{r,n}(z,\lambda)  \\
& \ge   \frac{r \lambda }{1-r}  \varphi\bigg( \frac{n}{2 r} -  \frac{1-\lambda}{r} \sqrt{ \frac{(1-r) n z}{2 \lambda(1-\lambda)}  } \, \bigg)  \notag \\
&  \quad + \frac{r (1- \lambda)}{1-r}  \varphi\bigg(  \frac{n}{2 r}  +  \frac{\lambda}{r} \sqrt{ \frac{(1-r)n z}{2 \lambda(1-\lambda)}  } \, \bigg)  -\frac{r}{1-r} \varphi\left( \frac{n}{2r} \right)  \\
&  =    \frac{  \lambda}{a}  \varphi\left( 1 - \sqrt{\left( \tfrac{1- \lambda}{\lambda} \right)  a z } \right)   +  \frac{(1\!-\!\lambda)}{a} \varphi\bigg( 1 + \sqrt{\left( \tfrac{\lambda}{1- \lambda} \right)  a z } \bigg) ,
\end{align*}
where $a = 2 (1-r) /n$. Using the inequality $\varphi(x) \ge  (3/2)(x-1)^2/(x+2)$ leads to
\begin{align}
Q_{r,n}(\lambda,z) 
& \ge \frac{z}{2} \left[ \left(1 - \sqrt{\left( \tfrac{1- \lambda}{\lambda} \right)  b z } \right)\bigg(1 + \sqrt{\left( \tfrac{\lambda}{1- \lambda} \right)  b z } \bigg) \right]^{-1} \notag \\
& \ge \frac{z}{2}\bigg(1 + \sqrt{\left( \tfrac{\lambda}{1- \lambda} \right)  b \,  z } \bigg)^{-1}  ,  \label{eq:Q_LB}
\end{align}
where $b=  2 (1-r) /( 9n)$.

Observe that the right-hand side of \eqref{eq:Q_LB} converges to $z/2$ as $n$ increases to infinity. It turns out this limiting behavior is tight. Using \eqref{eq:binet}, it is straightforward to show that $Q_{n}(\lambda,z)$ converges pointwise to $z/2$ as $n$ increases to infinity,  that is
\begin{align}
\lim_{n \to \infty}  Q_{r,n}(\lambda,z)  = \frac{1}{2} z. \label{eq:Q_lim}
\end{align}
for any fixed pair $(\lambda, z) \in (0,1) \times (0,\infty)$.

\subsection{Proof of Proposition~\ref{prop:Delta_Gaussian_proof}} \label{sec:Delta_Gaussian_proof}
Let $D = (0,1) \times (0,\infty)$.  For fixed $r \in (0,1)$ we use $Q_{n}(\lambda, z)$ to denote the function $Q_{r,n}(\lambda, z)$ defined in \eqref{eq:Q} and we use $G_n(\lambda, z)$ to denote the right-hand side of \eqref{eq:Delta_Gaussian_c}. These functions are defined to be equal to positive infinity for any pair $(\lambda, z) \in D$ such that \eqref{eq:lambda_z_cond} does not hold. 

Note that the terms $\theta(n/(2r))$ and $\theta(n/2)$ converge to zero in the limit as $n$ increases to infinity. In conjunction with \eqref{eq:Q_lim}, this shows that  $G_n(\lambda, z) $ converges pointwise to a limit $G(\lambda, z) $  given by 
\begin{align*}
G(\lambda, z) & = \theta\Big( \frac{r \lambda}{1\!-\!r} \Big) + \theta\Big( \frac{r (1\!-\!\lambda)}{1-r} \Big) - \theta\Big( \frac{r}{1\!-\!r} \Big) \notag \\
& \quad + \frac{1}{2}z    -  \frac{1}{2} \log\left( z \right) - \frac{1}{2} \log( r^\frac{1}{r-1}).
\end{align*}
At this point, the  correspondence with the lognormal distribution can be seen from the fact that $G(\lambda, z) $ is equal to the right-hand side of \eqref{eq:Delta_lognormal_c} evaluated with $u \sigma^2 = z$. 

To show that the gap corresponding to the  lognormal distribution provides an upper bound on the limit, we use
\begin{align}
\limsup_{n \to \infty} \Delta_r(Y) & = \limsup_{n \to \infty}  \inf_{(\lambda, z) \in D} G_n(\lambda, z)  \notag \\
& \le  \inf_{(\lambda, z) \in D}   \limsup_{n \to \infty} G_n(\lambda, z) \notag \\
&=\inf_{(\lambda, z) \in D} G(\lambda, z) \notag \\
& = \Delta_r(X).  \label{eq:G_n_a} 
\end{align}
Here, the last equality follows from the analysis in Appendix~\ref{sec:lognormal_proof}, which shows that the minimum of $G(\lambda, z)$ is a attained at $\lambda = 1/2$ and $z = 1$. 

To prove the lower bound requires a bit more work. Fix any $\eps \in (0,1)$ and let $D_\eps = (0,1-\eps] \times (0,\infty)$. Using the lower bound on $Q_{n}(\lambda, z)$ given in  \eqref{eq:Q_LB}, it can be verified that
\begin{align*}
\liminf_{ n \to \infty} \inf_{(\lambda,z) \in D_\eps } \left[  Q_{r,n} (z,\lambda)   - \frac{1}{2} \log z \right] \ge \frac{1}{2} .
\end{align*}
Consequently, we have
\begin{align}
\liminf_{n \to \infty}  \inf_{(\lambda, z) \in D_\eps }  G_n(\lambda, z) 
& =   \inf_{(\lambda, z) \in D_\eps}  G(\lambda, z)  \ge \Delta_r(X).   \label{eq:G_n_b} 
\end{align}
To complete the proof we will show that for any sequence  $\lambda_n$ that converges to one as $n$ increases to infinity, we have
\begin{align}
\liminf_{n \to \infty}  \inf_{z \in (0,\infty) }   G_n(\lambda_n, z) = \infty.\label{eq:G_n_c} 
\end{align}
To see why this is the case, note that by \eqref{eq:Beta_tilde_alt} and \eqref{eq:Beta_tilde_alt2},
\begin{align*}
 \theta\big( \tfrac{r \lambda}{1-r} \big) + \theta\big( \tfrac{r (1-\lambda)}{1-r} \big) - \theta\big( \tfrac{r}{1-r} \big) \ge  \frac{1}{2} \log\Big( \frac{1-r}{2 \pi r \lambda   (1\!-\!\lambda) } \Big). 
\end{align*}
Therefore, we can write 
\begin{align}
G_n\left(\lambda,z \right) \ge Q_{n}(\lambda,z) - \frac{1}{2}\log\left( \lambda (1-\lambda) z\right) + c_{n}, \label{eq:G_LB}
\end{align}
where $c_{n}$ is bounded uniformly for all $n$. Making the substitution $u = \lambda  (1-\lambda) z $, we obtain
\begin{align*}
\inf_{z> 0}  G_n\left(\lambda, z \right) & \ge \inf_{u > 0}  \left[ Q_{n}\left( \lambda, \frac{u}{\lambda(1-\lambda)}\right) - \frac{1}{2} \log u \right] + c_{n}.
\end{align*}
Next, let  $b_n =  2 (1-r)/(9n)$. The lower bound in \eqref{eq:Q_LB}  leads to
\begin{align}
\MoveEqLeft[1]  \inf_{u>0} \left[ Q_{n}\left(\lambda, \frac{u}{\lambda (1-\lambda)} \right) - \frac{1}{2} \log u \right] \notag \\ 
& \ge \inf_{u >0}  \left[   \frac{u}{2 \lambda   } \left( \frac{ 1}{ 1 - \lambda  +\sqrt{  b_n  u } } \right)   - \frac{1}{2} \log u \right].  \label{eq:Q_LB3} 
\end{align}
The limiting behavior in \eqref{eq:G_n_c} can now be seen as a consequence of  \eqref{eq:G_LB} and the fact that, for any sequence $\lambda_n$ converging to one,  the right-hand side \eqref{eq:Q_LB3} increases without bound as $n$ increases. Combining \eqref{eq:G_n_a}, \eqref{eq:G_n_b}, and \eqref{eq:G_n_c} establishes that the large $n$ limit of $\Delta_r(Y)$ exists and is equal to $\Delta_r(X)$. This concludes the proof of Proposition~\ref{prop:Delta_Gaussian_proof}



\subsection{Proof of Inequality~\eqref{eq:h_logmoment_inq}} \label{proof:eq:h_logmoment_inq} 

Given any $\lambda \in (0,1)$ and $u \in (0,\infty)$ let
\begin{align*}
p(r) = \frac{1-r}{r} - \sqrt{ \frac{ 1-r}{r} \left(\frac{ 1-\lambda}{\lambda}\right) u }  \\
q(r) = \frac{1-r}{r}  +  \sqrt{ \frac{ 1-r}{r} \left( \frac{\lambda}{1-\lambda}\right) u } .
\end{align*}
We need the following results, which characterize the terms in Proposition~\ref{prop:entropy_bound}  in the limit as $r$ increases to one.  

\begin{lemma}\label{lem:psi_big_r} The function  $\psi_r(p(r),q(r))$ satisfies
\[
\lim_{r \to 1}  \psi_r(p(r),q(r)) =  \sqrt{ \frac{2 \pi}{u} }
\]
\end{lemma}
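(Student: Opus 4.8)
The plan is to compute the limit directly from the definition of $\psi_r$ in \eqref{eq:psi_r}, tracking how each factor behaves as $r\to 1$. Recall that
\[
\psi_r(p,q) = \frac{1}{q-p}\,\widetilde{\Beta}\!\left(\frac{r\lambda}{1-r},\frac{r(1-\lambda)}{1-r}\right),
\]
and that under the given parametrization $p(r),q(r)$ we have, first, $q(r)-p(r) = \sqrt{\tfrac{1-r}{r}}\,\sqrt{\tfrac{u}{\lambda(1-\lambda)}}$, so $1/(q-p) \to \infty$ like $\sqrt{r/(1-r)}$; and second, the two arguments of $\widetilde{\Beta}$ both blow up, with $a := r\lambda/(1-r)\to\infty$ and $b := r(1-\lambda)/(1-r)\to\infty$ while their ratio stays fixed at $\lambda/(1-\lambda)$. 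So the content of the lemma is that the $\widetilde{\Beta}$ factor decays like $\sqrt{(1-r)/r}$ times an explicit constant, and the two rates cancel to leave $\sqrt{2\pi/u}$.

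The cleanest route is to use the identity \eqref{eq:Beta_tilde_alt} from Appendix~\ref{sec:gamma_beta}, namely
\[
\log\!\left(\widetilde{\Beta}(x,y)\sqrt{\tfrac{xy}{2\pi(x+y)}}\right) = \theta(x)+\theta(y)-\theta(x+y).
\]
Applying this with $x=a$, $y=b$, so $x+y = r/(1-r)$, and using that $\theta$ vanishes at infinity (stated after \eqref{eq:binet}), the right-hand side tends to $0$. Hence $\widetilde{\Beta}(a,b) \sim \sqrt{2\pi(a+b)/(ab)}$ as $r\to 1$. Now
\[
\frac{a+b}{ab} = \frac{r/(1-r)}{\frac{r\lambda}{1-r}\cdot\frac{r(1-\lambda)}{1-r}} = \frac{1-r}{r}\cdot\frac{1}{\lambda(1-\lambda)},
\]
so $\widetilde{\Beta}(a,b)\sim \sqrt{\tfrac{2\pi(1-r)}{r\lambda(1-\lambda)}}$. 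Multiplying by $1/(q-p) = \sqrt{\tfrac{r}{1-r}}\sqrt{\tfrac{\lambda(1-\lambda)}{u}}$, the factors $\sqrt{(1-r)/r}$ and $\sqrt{\lambda(1-\lambda)}$ cancel, leaving $\sqrt{2\pi/u}$, as claimed.

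I would present this as: (i) record the expressions for $q(r)-p(r)$ and for the two Beta-arguments $a,b$ under the stated parametrization; (ii) invoke \eqref{eq:Beta_tilde_alt} and $\lim_{x\to\infty}\theta(x)=0$ to get the asymptotic $\widetilde{\Beta}(a,b)\sqrt{\tfrac{ab}{2\pi(a+b)}}\to 1$; (iii) substitute and simplify. The only mild subtlety — not really an obstacle — is making sure that as $r\to 1$ both arguments $a,b$ genuinely go to infinity (which holds since $\lambda\in(0,1)$ is fixed), so that the vanishing of $\theta$ applies to each term; one should also note $\lambda$ and $u$ are held fixed while $r$ varies, which is exactly the setup of the parametrization. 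Everything else is routine algebra with square roots.
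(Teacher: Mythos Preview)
Your proof is correct and follows essentially the same approach as the paper's: both apply \eqref{eq:Beta_tilde_alt} to rewrite $\widetilde{\Beta}(a,b)$ in terms of $\theta$, use $\theta(x)\to 0$ as $x\to\infty$, and combine with the explicit expression for $q(r)-p(r)$. Your write-up is arguably cleaner, since the paper's stated value of $q(r)-p(r)$ appears to drop the factor of $\sqrt{u}$ that you correctly include.
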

\begin{proof}
Starting with  \eqref{eq:Beta_tilde_alt}, we can write
\begin{align}
\MoveEqLeft \psi_r(p,q)  = \frac{1}{q-p}  \sqrt{ \frac{ 2 \pi  (1-r) }{r  \lambda(1-\lambda}}  \notag \\
& \quad \times  \exp\left(  \theta\Big( \frac{r \lambda}{1\!-\!r} \Big) + \theta\Big( \frac{r (1\!-\!\lambda)}{1-r} \Big) - \theta\Big( \frac{r}{1\!-\!r} \Big)   \right).  \notag
\end{align}
As $r$ converges to one the terms in the exponent converge to zero. Noting that $q(r) - p(r)  = \sqrt{ r \lambda(1-\lambda)/(1-r)}$ completes the proof.
\end{proof}

\begin{lemma}\label{lem:L_big_r}
If  $X$ is a random variable such that $s \mapsto \ex{|X\|^s}$ is finite in a neighborhood of zero, then $\ex{ \log(X)}$ and $\var( \log(X))$ are finite, and 
\[
\lim_{r \to1} L_r(X; p(r),q(r)) =   \ex{ \log |X|} +  \frac{u}{2}  \var(\log |X|) .
\]
\end{lemma}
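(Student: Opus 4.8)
The plan is to expand $L_r(X;p(r),q(r))$ using the definition and show that the two moment terms combine, in the limit $r\to 1$, into the first two terms of a Taylor expansion of $\log\ex{|X|^s}$ around $s=0$. First I would record that $p(r)\to 0$ and $q(r)\to 0$ as $r\to1$, and that the weights satisfy $\lambda$ fixed, $1-\lambda$ fixed, with $\frac{r\lambda}{1-r}\to\infty$ in such a way that the prefactors $\frac{r\lambda}{1-r}$ and $\frac{r(1-\lambda)}{1-r}$ each blow up but multiply logarithms of moments that tend to $0$. Concretely, write $M(s) = \log\ex{|X|^s}$. Under the hypothesis that $s\mapsto\ex{|X|^s}$ is finite in a neighborhood of zero, $M$ is analytic there (moment generating function of $\log|X|$ is finite near $0$), so $M(0)=0$, $M'(0)=\ex{\log|X|}$, and $M''(0)=\var(\log|X|)$, and in particular $\ex{\log|X|}$ and $\var(\log|X|)$ are finite; this disposes of the finiteness claims.

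Next I would substitute into
\[
L_r(X;p,q) = \frac{r\lambda}{1-r}M(p(r)) + \frac{r(1-\lambda)}{1-r}M(q(r)),
\]
and Taylor-expand $M$ at $0$ to second order: $M(s) = M'(0)\,s + \tfrac12 M''(0)\,s^2 + o(s^2)$ as $s\to0$. The linear terms contribute $\frac{r}{1-r}M'(0)\big(\lambda p(r) + (1-\lambda) q(r)\big)$. From the definitions, $\lambda p(r) + (1-\lambda)q(r) = \frac{1-r}{r}$ exactly (the square-root terms cancel because they carry opposite signs with coefficients $(1-\lambda)\sqrt{(1-\lambda)/\lambda}$ and ... — I would verify the algebra: $-\lambda\sqrt{\tfrac{1-r}{r}\cdot\tfrac{1-\lambda}{\lambda}\,u} + (1-\lambda)\sqrt{\tfrac{1-r}{r}\cdot\tfrac{\lambda}{1-\lambda}\,u} = \sqrt{\tfrac{1-r}{r}u}\,\big(-\sqrt{\lambda(1-\lambda)} + \sqrt{\lambda(1-\lambda)}\big)=0$). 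Hence the linear part equals $M'(0)=\ex{\log|X|}$ for every $r$. The quadratic terms contribute $\frac{r}{1-r}\cdot\tfrac12 M''(0)\big(\lambda p(r)^2 + (1-\lambda) q(r)^2\big)$; since $p(r)^2$ and $q(r)^2$ are each $O\big(\tfrac{1-r}{r}\big)$, and more precisely one computes $\lambda p(r)^2 + (1-\lambda)q(r)^2 = \big(\tfrac{1-r}{r}\big)^2 + \tfrac{1-r}{r}u$ (the cross terms in the squares again combine cleanly), the quadratic contribution is $\tfrac12 M''(0)\big(\tfrac{1-r}{r} + u\big)\to \tfrac{u}{2}M''(0) = \tfrac{u}{2}\var(\log|X|)$ as $r\to1$.

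Finally I would control the remainder: the $o(s^2)$ error in $M(s)$, once multiplied by $\frac{r}{1-r}$ and evaluated at $s=p(r)$ or $s=q(r)$ with $s^2 = O\big(\tfrac{1-r}{r}\big)$, is $\frac{r}{1-r}\cdot o\big(\tfrac{1-r}{r}\big)=o(1)$, so it vanishes in the limit. Combining the three pieces gives $\lim_{r\to1}L_r(X;p(r),q(r)) = \ex{\log|X|} + \tfrac{u}{2}\var(\log|X|)$, as claimed. The main obstacle is making the error estimate uniform enough — i.e., ensuring the Taylor remainder of $M$ near $0$ is genuinely $o(s^2)$ with a modulus that does not interact badly with the $\frac{r}{1-r}$ blow-up; this is handled by analyticity of $M$ on a fixed neighborhood of $0$ (so the remainder is actually $O(s^3)$ there, giving $\frac{r}{1-r}O\big((\tfrac{1-r}{r})^{3/2}\big)=O\big((\tfrac{1-r}{r})^{1/2}\big)\to0$), together with the observation that $p(r),q(r)$ stay in that fixed neighborhood for $r$ close enough to $1$.
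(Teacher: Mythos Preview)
Your proposal is correct and follows essentially the same approach as the paper's proof: both introduce the cumulant generating function $\Lambda(s)=\log\ex{|X|^s}$ (your $M$), invoke analyticity near $0$ to get a Taylor expansion with $O(s^3)$ remainder, compute that the weighted linear and quadratic combinations of $p(r),q(r)$ give exactly $a+\bigl(\tfrac{1-r}{r}+u\bigr)b$ with $a=\ex{\log|X|}$ and $b=\tfrac12\var(\log|X|)$, and then bound the cubic remainder by $C\tfrac{r}{1-r}\bigl(\lambda|p(r)|^3+(1-\lambda)|q(r)|^3\bigr)=O\bigl((\tfrac{1-r}{r})^{1/2}\bigr)\to0$. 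Your explicit verification of the identities $\lambda p+(1-\lambda)q=\tfrac{1-r}{r}$ and $\lambda p^2+(1-\lambda)q^2=\bigl(\tfrac{1-r}{r}\bigr)^2+\tfrac{1-r}{r}u$ is exactly what the paper uses implicitly.
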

\begin{proof}
Let $\Lambda(s) = \log( \ex{ |X|^s})$. The assumption that $\ex{|X|^s}$ is finite in a neighborhood of zero means that  $\ex{ (\log |X|)^m}$ is finite for all positive integers $m$ that $\Lambda(s)$ is real-analytic in a neighborhood of zero, that is there exists constants $\delta > 0 $ and $C < \infty$, depending on $X$, such that
\[
\left| \Lambda(s) -  a s  +  b s^2  \right| \le C\,  |s|^3, \quad \text{for all $|s| \le \delta$}, 
\]
where $a = \ex{ \log |X|}$ and $b = \frac{1}{2} \var(|X|)$. Consequently, for all $r$ such that $1-\delta < p(r) < (1-r)/r < q(r) < 1 + \delta$, it follows that
\begin{align*}
\MoveEqLeft \left| L_r(X ; p(r),q(r))  - a  - \left ( \tfrac{1-r}{r}  + u\right )  b \right|\\
& \quad  \le C \frac{r}{1-r}  \left( \lambda |p(r)|^3 + (1-\lambda) |q(r)|^3\right) .
\end{align*}
Taking the limit as $r$ increases to one completes the proof. 
\end{proof}

We are now ready to prove Inequality~\eqref{eq:h_logmoment_inq}.  Combining Proposition~\ref{prop:entropy_bound} with Lemma~\ref{lem:psi_big_r} and Lemma~\ref{lem:L_big_r} yields
\begin{align*}
\limsup_{r \to \infty} h_r(X) &\le  \frac{1}{2} \log\left( \frac{2 \pi}{u}  \right)  + \ex{ \log X} + \frac{u}{2}  \var( \log X).
\end{align*}
The stated inequality follows from evaluating the right-hand side with $u = 1/\var(\log X)$ and recalling that $h(X)$ corresponds to the limit of $h_r(X)$ as $r$ increases to one.

\section{Properties of Logarithm-Power Ratio} \label{sec:properties_kappa}

This section studies properties of the function $\kappa: (0,1] \to \reals_+$ defined by
\begin{align}
\kappa(t)  = \sup_{u \in (0,\infty)} \frac{ \log(1 + u)}{u^t}  .
\end{align} 
For $t = 1$, the bound $\log(1+u) \le u$ means that $\kappa(1) \le 1$. Noting $\lim_{u \to 0} \log(1+u)/u = 1$ shows that this inequality is tight, and thus $\kappa(1) = 1$.  For any $t \in (0,1)$,  it can be verified via differentiation that the supremum is attained on $(0,\infty)$ by the unique solution $u^*_t$ to the fixed-point equation
\begin{align}
u =  t   (1+u) \log(1 +u). \label{eq:u_fp}
\end{align}
The solution to this equation can be expressed as
\[
u_t^* =  \exp\left( W\left( -\tfrac{1}{t}\exp\left( -\tfrac{1}{t}\right)\right) +  \tfrac{1}{t} \right) - 1,
\]
where Lambert's function $W(z)$ is the  solution to the equation $z = x \exp(x)$ on the interval on $[-1, \infty)$.

\begin{lemma}
The function $g(t) = t \kappa(t)$ is nondecreasing on $(0,1]$ with $\lim_{t \to 0} g(t) = 1/e$ and $g(1) = 1$. 
\end{lemma}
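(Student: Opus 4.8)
The plan is to work directly from the characterization of the maximizer. First I would recall that for $t\in(0,1)$ the supremum defining $\kappa(t)$ is attained at the unique point $u_t^*\in(0,\infty)$ solving the fixed-point equation~\eqref{eq:u_fp}, namely $u = t(1+u)\log(1+u)$. Substituting this relation back into the definition gives a clean closed form: since $\log(1+u_t^*) = u_t^*/\big(t(1+u_t^*)\big)$, we get
\[
g(t) = t\,\kappa(t) = t\,\frac{\log(1+u_t^*)}{(u_t^*)^t} = \frac{(u_t^*)^{1-t}}{1+u_t^*}.
\]
So the whole problem reduces to understanding the scalar quantity $h(u) := u^{1-t}/(1+u)$ evaluated at $u = u_t^*$, together with how $u_t^*$ varies with $t$.

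Next I would establish monotonicity of $t\mapsto u_t^*$. Rewriting~\eqref{eq:u_fp} as $t = u/\big((1+u)\log(1+u)\big)$, one checks by differentiation that the right-hand side is a strictly decreasing function of $u$ on $(0,\infty)$ (its numerator grows like $u$ while the denominator grows faster, and a short computation of the derivative confirms strict monotonicity); hence $u_t^*$ is a strictly decreasing function of $t$, with $u_t^*\to\infty$ as $t\to 0^+$ and $u_t^*\to 0^+$ as $t\to 1^-$. Then I would compute $g'(t)$. Because $u_t^*$ is stationary for the inner optimization, the envelope theorem applies: differentiating $g(t)=t\sup_u \log(1+u)/u^t$ with respect to $t$, only the explicit $t$-dependence contributes at the optimum, giving
\[
g'(t) = \kappa(t) + t\cdot\frac{\partial}{\partial t}\left.\frac{\log(1+u)}{u^t}\right|_{u=u_t^*} = \kappa(t) - t\,\kappa(t)\log u_t^* = \kappa(t)\big(1 - t\log u_t^*\big).
\]
Since $\kappa(t)>0$, the sign of $g'(t)$ is that of $1 - t\log u_t^*$. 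Using the fixed-point relation, $t\log u_t^*$ can be compared to $t\log\log$-type quantities; the main obstacle is showing $1 - t\log u_t^* \ge 0$ on all of $(0,1]$. I would handle this by noting that from~\eqref{eq:u_fp}, $t = u_t^*/\big((1+u_t^*)\log(1+u_t^*)\big) < 1/\log(1+u_t^*) \le 1/\log u_t^*$ when $u_t^*\ge 1$, which gives $t\log u_t^* < 1$; and for $u_t^* < 1$ (i.e.\ $t$ near $1$) we have $\log u_t^* < 0$ so $1 - t\log u_t^* > 1 > 0$ trivially. Hence $g'(t)\ge 0$ and $g$ is nondecreasing on $(0,1]$.

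Finally I would compute the two boundary values. For $t=1$ the estimate $\log(1+u)\le u$ with equality in the limit $u\to 0$ gives $\kappa(1)=1$, hence $g(1)=1$. For the limit $t\to 0^+$ I would use $g(t) = (u_t^*)^{1-t}/(1+u_t^*)$ with $u_t^*\to\infty$; from the fixed-point equation $\log(1+u_t^*) = u_t^*/\big(t(1+u_t^*)\big)\sim 1/t$, so $u_t^*\sim e^{1/t}$, and therefore $g(t)\sim (e^{1/t})^{1-t}/e^{1/t} = e^{(1-t)/t - 1/t} = e^{-1}$; making this asymptotic precise (controlling the $1+u_t^*$ versus $u_t^*$ discrepancy and the $(1-t)$ exponent) yields $\lim_{t\to 0^+} g(t) = 1/e$. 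The one delicate point to get right is the error analysis in this last limit, but since $u_t^*\to\infty$ the relative corrections are $o(1)$ and the computation is routine. This completes the proof.
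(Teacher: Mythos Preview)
Your proof is correct and follows essentially the same route as the paper: both apply the envelope theorem to obtain $g'(t)=\kappa(t)(1-t\log u_t^*)$ (the paper writes this as $(1/t-\log u_t^*)g(t)$), reduce nonnegativity of the sign to the inequality $u_t^*\log u_t^*\le (1+u_t^*)\log(1+u_t^*)$, and compute the $t\to 0$ limit by sending $u_t^*\to\infty$. The only cosmetic difference is that the paper evaluates $\log g(t)=\log\!\big(u_t^*/(1+u_t^*)\big)-\frac{u_t^*\log u_t^*}{(1+u_t^*)\log(1+u_t^*)}$ directly as a function of $u_t^*$ alone, which sidesteps the intermediate asymptotic $u_t^*\sim e^{1/t}$ you use.
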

\begin{proof}
The fact that $g(1) = 1$ follows from $\kappa(1)=1$. 
By the envelope theorem \cite{milgrom:2002}, the derivative of $g(t)$ can be expressed as
\[
g'(t)  = \left( \frac{1}{t} - \log(u^*_t )  \right)  g(t) .
\]
Therefore, the derivative satisfies
\begin{align*}
g'(t) \ge 0 & \iff  \frac{1}{t}  - \log(u^*_t) \ge  0  \\
&  \iff \frac{(1+u_t^*) \log(1+u_t^*)}{u^*_t} - \log(u_t^*)  > 0 \\
&  \iff (1+u_t^*) \log(1+u_t^*) \ge  u_t^* \log(u_t^*) .
\end{align*}
Noting that $u \mapsto u \log u$ is negative on $(0,1)$ and nonnegative and nondecreasing on $[1, \infty)$ shows that the last condition is always satisfied, and hence $g'(t)$ is nonnegative. 

To prove the small $t$ limit we can rearrange  \eqref{eq:u_fp} to see that $u^*_t$ satisfies 
\begin{equation}
 \frac{u^*_t}{(1+u^*_t) \log(1+u^*_t)} = t,  \label{eq:u_fp2} 
\end{equation}
and hence
\begin{equation}
\log(g(t))  = \log\left( \frac{u^*_t}{1+u^*_t} \right)  - \frac{u^*_t \log u^*_t}{(1+u^*_t) \log(1+u^*_t)} . \label{eq:gt_alt} 
\end{equation}
Now, as $t$ decreases to zero,  \eqref{eq:u_fp2} shows that  $u_t^*$ increases to infinity. By \eqref{eq:gt_alt}, it then follows that $\log(g(t))$ converges to negative one, which proves the desired limit. 
\end{proof}

%
%
%
%
%
\bibliographystyle{IEEEtran}

\bibliography{two_moment_arxiv_v2.bbl}

\end{document}

\clearpage
\appendices

\section{Properties of Euclidean Ball}

The Euclidean ball of radius one is denoted by $B^n = \{ x \in \reals^n \, : \, \|x\| \le 1\}$. The ball is defined according to $S^{n-1} = \{ x \in \reals^n \, : \, \|x\| \le 1\}$. The volume of the ball $\omega_n = \vol(B^n) $ is given by
\begin{align}
\vol(B^n)  & = \frac{ \pi^\frac{n}{2}}{ \Gamma\left( \frac{n}{2} + 1 \right)}\\
\vol(S^{n-1}) & = \frac{2 \pi^\frac{n}{2}}{\Gamma(\frac{n}{2})}  = n \vol(B^n) 
\end{align}
The volume of the sphere is

\section{Properties of $\kappa(t)$}
Let $g : (0,\infty)^2  \to \reals_+$ be defined according  $g(x,r) = \log(1 + x) /( r x^{1/r})$ and let the partial derivatives with respect to $x$ and $r$ be denoted by $g_x(x,r)$ and $g_r(x,r)$. By differentiation, we see that 
\begin{align}
g_x(x,r) & = \frac{1}{r^2 x^{\frac{1}{r} + 1}} \left( \frac{ r x}{ 1+ x} - \log(1+x) \right). \label{eq:g_x}
\end{align}
For fixed $r  > 1$, it can be verified that the function $x \mapsto g_x(x,r)$ has a unique root $x_r^* \in (0,\infty)$ given by
\begin{align}
x_r^* = \exp(r + W(- r e^{-r})) - 1.
\end{align}
Since $\lim_{x \to 0} g(x,r) = \lim_{x \to \infty} g(x,r) = 0$, we can conclude that $\sup_{ x \in (0,1)} g(x,r)$ is attained at $x^*_r$. Inequality \eqref{eq:log_to_power} then follows from noting that $\kappa_r = r g(x_r^*,r)$. 

Next, we consider the properties of the function $h(r) \triangleq \kappa_r / r$. Using the envelope theorem, the derivative can be expressed as 
\begin{align}
h'(r) = g_r(x_r^*,r) & = \frac{\log(1 + x) }{r^3 [x^*_r]^{\frac{1}{r} }} \left( \log(x_r^*)  - r \right).
\end{align}
From \eqref{eq:g_x}, it can be verified that $\log(x_r^*)\le r$ and thus $h'(r) \le 0$. Moreover, as $r \to \infty$, it follows from \eqref{eq:g_x} that $\log(1+ x_r^*) \to r$ and $\log(x_r^*)/r \to 1$. From this, we conclude that $h(r) \to 1/e$.

\section{Example with multiplicative Gaussian channel}

\section{Examples} 
\subsection{Examples with Gamma distribution}

Let the density of the Gamma distribution be given by
\begin{align}
f(x  ; k , \theta) =   \frac{1}{\Gamma(k)}  \theta^{-k} x^{k-1}e^{-\frac{x}{\theta}}
\end{align}
We see that
\begin{align}
f(x  ; k , \theta_1)f(x  ; k , \theta_2)  & =     \frac{1}{\Gamma(k)}  \theta_1^{-k} x^{k-1}e^{-\frac{x}{\theta_1}}  \frac{1}{\Gamma(k)}  \theta_2^{-k} x^{k-1}e^{-\frac{x}{\theta_2}}\\
 & =   \frac{1}{\Gamma^2(k)}  g^{-2k} x^{2k-2}e^{-\frac{2 x}{h}}\\
  & =   \frac{\Gamma(2k-1) }{\Gamma^2(k)} \frac{2}{h}  \left( \frac{h/2}{g} \right)^{2k} f(x ; 2k-1, h/2) \\
    & =   \frac{\Gamma(2k-1) }{2^{2k-1} \Gamma^2(k)} \frac{1}{h}  \left( \frac{h}{g} \right)^{2k} f(x ; 2k-1, h/2) 
 \end{align}
where $h$ and $g$ denote the harmonic and geometric means of $(\theta_1,\theta_2)$. 

Also, for all $r$ we have
\begin{align}
f^r(x  ; k , \theta)  & =     \frac{1}{\Gamma^r(k)}  \theta^{-rk} x^{rk-r}e^{- r \frac{x}{\theta}} \\
& =     \frac{\Gamma( rk + 1 -r)}{\Gamma^r(k)}  \theta^{-rk} \left( \frac{\theta}{r} \right)^{r k + 1 - r}  f\left(x ; r k + 1 - r , \tfrac{\theta}{r} \right) \\
& =     \frac{\Gamma( rk +  1 -r )}{r^{ rk + 1 - r} \Gamma^r(k)}  \theta^{1 - r}  f\left(x ; r k + 1 - r , \tfrac{\theta}{r} \right) 
 \end{align}
where $h$ and $g$ denote the harmonic and geometric means of $(\theta_1,\theta_2)$.

And also, 
\begin{align}
\int |x|^p f(x; k, \theta) & = \theta^p \frac{ \Gamma( k + p)}{\Gamma(k)} 
\end{align}

 \subsection{Examples with Gaussian distribution} 
 Let the density of the $\normal(\theta, t \, I_n)$ be given by
\begin{align}
f(x ;\theta,t) =   \frac{1}{(2\pi)^\frac{n}{2}}  \exp\left( -\frac{1}{2 }  \|x - \theta\|^2 \right)
\end{align}
We see that
\begin{align}
f(x  ; \theta_1,t)f(x  ; \theta_2,t)  & =  \frac{1}{(2\pi t)^\frac{n}{2}}  \exp\left( -\frac{1}{2 t}  \|x - \theta_1\|^2 \right) \frac{1}{(2\pi t)^\frac{n}{2}}  \exp\left( -\frac{1}{2 t}  \|x - \theta_2 \|^2 \right)  \\
& =  \frac{1}{(2\pi t)^\frac{n}{2}}  \exp\left( -\frac{1}{2 t}  \| \sqrt{2} x - \frac{1}{\sqrt{2}} (  \theta_1 + \theta_2) \|^2 \right) \frac{1}{(2\pi t)^\frac{n}{2}}  \exp\left( -\frac{1}{2 t}  \| \frac{1}{\sqrt{2}} (\theta_1- \theta_2)\|^2 \right)  \\
& =    2^{-n}  f(x; (\theta_1 + \theta_2) / 2, t/2) f(0; (\theta_1 - \theta_2) / 2, t/2) 
 \end{align}
where $h$ and $g$ denote the harmonic and geometric means of $(t_1,t_2)$.

 \subsection{More examples with Gaussian distribution} 
 Let the density of the $\normal_n(0, t)$ be given by
\begin{align}
f(x ;t) =   \frac{1}{(2\pi t)^\frac{n}{2}}  \exp\left( -\frac{1}{2 t}  \|x\|^2 \right)
\end{align}
We see that
\begin{align}
f(x  ; t_1)f(x  ; t_2)  & =  \frac{1}{(2\pi t_1)^\frac{n}{2}}  \exp\left( -\frac{1}{2 t_1}  \|x\|^2 \right) \frac{1}{(2\pi t_2)^\frac{n}{2}}  \exp\left( -\frac{1}{2 t_2}  \|x\|^2 \right)  \\
& = \frac{(h/2)^\frac{n}{2}  }{ (2 \pi)^\frac{n}{2}  g^n } f(x ,  h/2) 
 \end{align}
where $h$ and $g$ denote the harmonic and geometric means of $(t_1,t_2)$. 

Also, for all $r$ we have
\begin{align}
f^r(x  ; t)  & =    \frac{1}{(2\pi t)^\frac{rn}{2}}  \exp\left( -\frac{r}{2 t}  \|x\|^2 \right) \\
 & =    \frac{1}{(2\pi t)^\frac{rn}{2}}   \left(  2 \pi t/r\right)^\frac{n}{2}   f(x ; t/r)  \\
 & = (2 \pi t)^\frac{n(1-r)}{2} r^{- \frac{n}{2} }  f(x ; t/r)\\
  & = (2 \pi t r^\frac{1}{r-1} )^\frac{n(1-r)}{2}   f(x ; t/r)
 \end{align}
where $h$ and $g$ denote the harmonic and geometric means of $(\theta_1,\theta_2)$.

And also, 
\begin{align}
\int \|x\|^{np} f(x; t) \, \dd x  & =t^{\frac{np}{2}}  \frac{ 2^\frac{np}{2}  \Gamma( \frac{n(1 + p)}{2} )}{\Gamma(\frac{n}{2} )} 
\end{align}
 
 \newpage
\subsection{Generalized Gaussian}

We define the generalized Gaussian distribution 
\begin{align}
Y = X^\frac{1}{s} 
\end{align}
where $X \sim f(x ; k, \theta)$. Then it is clear the
\begin{align}
f(y  ; k , \theta,s) =   \frac{1}{ \theta^k \Gamma(k)}  y^{sk-1}e^{-\frac{1}{\theta}y^s}
\end{align}
 By construction, the moments satisfy
 \begin{align}
 \ex{ Y^p} = \theta^\frac{p}{s} \frac{ \Gamma(k + \frac{p}{s})}{\Gamma(k) } 
 \end{align}
 Also, we see that
 \begin{align}
 f^r(y; k,\theta,s) & =  \frac{1}{ \left[  \theta^k\Gamma(k) \right]^r}  y^{rsk-r}e^{-\frac{r}{\theta}y^s}\\
 & =  \frac{\Gamma\left(rk + \frac{1-r}{s}\right)  \left(\frac{\theta}{r} \right)^{rk + \frac{1-r}{s}}}{ \left[ \theta^{k} \Gamma(k)  \right]^r}    \\
 &  \quad   \times f\left(y ; rk + \frac{1-r}{s} , \frac{\theta}{r} , s\right)  \\
  & = \left( \frac{\theta}{r} \right) ^\frac{1-r}{s}   \frac{\Gamma\left(rk + \frac{1-r}{s}\right)}{  \left[ r^k \Gamma(k)\right]^r}    \\
 &  \quad  \times f\left(y ; rk + \frac{1-r}{s} , \frac{\theta}{r} , s\right) 
  \end{align}
 This means that
 \begin{align}
 h_r(Y) & = \frac{1}{s} \log\left( \frac{\theta}{r} \right)  + \frac{1}{1-r} \log \Gamma\left( rk + \frac{1-r}{s} \right)\\
 & \quad   - \frac{r}{1-r} \log\left( r^k \Gamma(k) \right) 
 \end{align}

\end{document}

\clearpage

